\documentclass[conference]{IEEEtran}
\IEEEoverridecommandlockouts
\usepackage{cite}
\usepackage{amsmath,amssymb,amsfonts}
\usepackage{graphicx}
\usepackage{textcomp}
\usepackage[utf8]{inputenc}
\def\BibTeX{{\rm B\kern-.05em{\sc i\kern-.025em b}\kern-.08em
    T\kern-.1667em\lower.7ex\hbox{E}\kern-.125emX}}

\usepackage{amsmath,amssymb,amsthm}
\usepackage{mathtools}
\usepackage{nicefrac}
\usepackage{stmaryrd}
\usepackage{thmtools}

\usepackage{enumitem}
\usepackage[normalem]{ulem}
\usepackage{graphicx}

\theoremstyle{theorem}
\newtheorem{theorem}{Theorem}
\theoremstyle{definition}

\theoremstyle{example}
\newtheorem{example}{Example}
\theoremstyle{lemma}
\newtheorem{lemma}{Lemma}
\theoremstyle{corollary}

\theoremstyle{claim}

\theoremstyle{remark}

\theoremstyle{proposition}
\newtheorem{proposition}{Proposition}
\theoremstyle{algorithm}

\theoremstyle{algorithm*}
\newtheorem*{algorithm*}{Algorithm}

\newlist{thmlist}{enumerate}{1}
\setlist[thmlist]{label=(\alph{thmlisti}), ref=\thetheorem(\alph{thmlisti}),noitemsep}

\usepackage{xspace}
\usepackage{complexity}
\usepackage{subcaption}
\usepackage[labelsep=period]{caption}
\usepackage{tikz}
\usetikzlibrary{shapes}
\usetikzlibrary{decorations.pathmorphing}
\usetikzlibrary{decorations.pathreplacing}

\tikzset{
	bluematch/.style   = {double, double distance=1.5pt, thick, >=stealth },				
	redheavymatch/.style   = {line width=3pt },
	redmatch/.style   = {line width=1pt, >=stealth },
	tinycirc/.style = {draw, circle, inner sep=0pt, minimum width=6pt},
	light/.style   = {densely dotted, line width=1pt, >=stealth }
}

\usepackage[color=white, linecolor=lightgray,prependcaption]{todonotes} %

\usepackage[linkbordercolor=red,citebordercolor=blue,urlbordercolor=green,pagebackref]{hyperref}
\usepackage[nameinlink,noabbrev]{cleveref}
\crefname{subsection}{subsection}{subsections}
\crefname{Subsection}{Subsection}{Subsections}
\crefname{figure}{Fig.}{Figs.}

\usepackage{lineno}
\newcommand{\defeq}{\vcentcolon=}

\newcommand{\wmax}{w_{\max}}

\newcommand{\T}[2]{T^{(#2)}_{#1}}

\newcommand{\eg}{e.g.,\xspace}
\newcommand{\ie}{i.e.,\xspace}

\newcommand{\Eopt}{E_{\text{opt}}}
\newcommand{\Esub}{E_{\text{sub}}}

\makeatletter%
\begin{document}

\title{Max-Product for Maximum Weight Matching -- Revisited}

\author{\IEEEauthorblockN{Mario Holldack}
\IEEEauthorblockA{\textit{Institut für Informatik} \\
\textit{Goethe-Universität}\\
Frankfurt am Main, Germany\\
holldack@thi.cs.uni-frankfurt.de}
}

\maketitle

\begin{abstract}
	We focus on belief propagation for the assignment problem, also known as the maximum weight bipartite matching problem. 
	We provide a constructive proof that the well-known upper bound on the number of iterations (Bayati, Shah, Sharma 2008) is tight up to a factor of four. 
	Furthermore, we investigate the behavior of belief propagation when convergence is not required.
	We show that the number of iterations required for a sharp approximation consumes a large portion of the convergence time. 
	Finally, we propose an ``approximate belief propagation'' algorithm for the assignment problem.
\end{abstract}

\begin{IEEEkeywords}
Belief Propagation,
Max-Sum Algorithm,
Assignment Problem, 
Matching,
Approximations
\end{IEEEkeywords}

\section{Introduction}
\label{sec:introduction}

Since Pearl's introduction of the \emph{belief propagation algorithm} (BP) in \cite{pearl1982reverend}, applications of BP have been extensively covered in the literature, ranging from artificial intelligence, computer vision, communication, and combinatorial optimization to statistical physics; see \cite{DBLP:journals/tit/KschischangFL01} for an introductory survey. 
The same algorithm is also known as the \emph{max-product}, \emph{max-sum}, or \emph{sum-product algorithm} among others
Here we address the application of BP -- that is the max-sum algorithm -- to the assignment problem in a weighted complete bipartite graph $K_{n,n}$, \ie the problem of assigning $n$ jobs to $n$ employees such that every job is assigned exactly once and the profit is maximized. The assignment problem is also known as the maximum weight matching problem in a weighted complete bipartite graph. 
Here it is sufficient to know that BP is an iterative graph algorithm where each node outputs a local solution (a so-called \emph{belief}) in every iteration. 
\textcolor{black}{More precisely, a local solution of a node $u$ is an edge $\{u,v\}$ that $u$ believes to be in a \emph{maximum weight matching} (MWM). }
The algorithm stops when all local solutions converge, that is when the outputs no longer change. 
In~\cite{DBLP:journals/tit/BayatiSS08}  Bayati, Shah, and Sharma show that BP converges to the MWM within ${2n{\cdot}\wmax}/{\varepsilon}$ iterations, where $\wmax \defeq \max \{ |w_e| : e \in E \}$ and $\varepsilon$ is the \emph{uniqueness gap}, \ie the difference between the sum of the weights of the best and the second best perfect matching. 
In total their BP implementation takes $\mathcal{O}({n^3{\cdot}\wmax}/{\varepsilon})$ operations for finding the unique MWM which is 
comparable with the best known sequential algorithms -- given that $\wmax$ and $\varepsilon$ are fixed parameters. 
As shown by Salez and Shah in~\cite{DBLP:journals/mor/SalezS09}, BP is an optimal algorithm for the MWM problem in complete bipartite graphs with randomly weighted edges, \ie with high probability BP finds the maximum weight matching within a constant number of iterations. 

In \Cref{thm:one-cycle} we show that the upper bound \cite{DBLP:journals/tit/BayatiSS08} of ${2n {\cdot}\wmax}/{\varepsilon}$ iterations for the \emph{convergence time} is is tight up to a factor of four.
Based on this result we construct weights for the $K_{n,n}$ such that BP does not find any good approximate MWM, even when the number of iterations is close to the convergence time. 
What is the reason behind this surprisingly poor approximation behavior? One possible explanation is that the \emph{BP matching}, \ie the set of edges for which the beliefs of the endpoints agree, consists only of few edges. We show in \Cref{thm:union-cycles} that any completion of a BP matching computed in an early iteration has a poor approximation factor.

The rest of this paper is organized as follows: \Cref{sec:bp-for-the-assignment-problem} describes our main results. \Cref{sec:one-cycle} and \Cref{sec:union-cycles} cover the proofs of \Cref{thm:one-cycle} and \Cref{thm:union-cycles}, respectively. \Cref{sec:approximate-bp} presents an approximate BP algorithm and \Cref{sec:conclusions} concludes the paper.

\section{BP for the Assignment Problem}
\label{sec:bp-for-the-assignment-problem}
Let $K_{n,n}$ be the complete bipartite graph with $n$ nodes in each layer.
In \cite{DBLP:journals/tit/BayatiSS08} Bayati, Shah, and Sharma implement and analyze BP for the assignment problem on $K_{n,n}$ where edges receive real-valued weights. Their result is one of most important success stories of \emph{Loopy BP}, \ie BP on graphs with cycles. 
In the following $\wmax$ is the maximum absolute value of any edge weight and $\varepsilon$ is the difference (\emph{uniqueness gap}) between the sum of the weights of the best and the second best perfect matching. 

\begin{theorem}[Bayati, Shah, Sharma, \cite{DBLP:journals/tit/BayatiSS08}]
	\label{thm:bayatishahsharma}
	For any edge weights for the $K_{n,n}$, the BP algorithm converges to the maximum weight matching within $\tfrac{2n {\cdot} \wmax}{ \varepsilon}$ iterations, provided  the maximum weight matching is unique.
\end{theorem}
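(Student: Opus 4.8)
The plan is the standard computation-tree (``unwrapped graph'') analysis of loopy BP. Recall that max-product is exact on trees and that $t$ rounds of BP on a graph reproduce exact inference on the depth-$t$ computation tree $T^{(t)}_r$ rooted at a node $r$; concretely, the belief of $r$ after $t$ iterations is the edge incident to $r$ in a maximum-weight matching of $T^{(t)}_r$ (with leaves left free). Hence it suffices to prove: if $t\ge 2n\wmax/\varepsilon$ then every maximum-weight matching of $T^{(t)}_r$ agrees with $M^\ast$ at the root. Doing this for each $r$ forces the belief vector to equal $M^\ast$, so the algorithm has stopped and has returned the MWM.

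So fix $r$, write $T=T^{(t)}_r$, and suppose for contradiction that some maximum-weight tree matching $\Lambda$ uses at $r$ an edge $e\ne M^\ast(r)$. Let $\mu$ be the lift of $M^\ast$ to $T$. Every internal node of $T$ sees all of its $K_{n,n}$-neighbours among its tree-neighbours, so $\mu$ matches every internal node, and only leaves may be left unmatched. The symmetric difference $\Lambda\oplus\mu$ has maximum degree two on a tree, hence is a disjoint union of paths whose endpoints are leaves. Since $\Lambda$ and $\mu$ match $r$ differently, $r$ lies on such a path $P$; every internal node of $P$ other than $r$ is entered from its parent, and since it has only one further symmetric-difference edge, $P$ must continue to one of its children. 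Thus $P$ descends monotonically from the root to a leaf on each side and has length $2t$.

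Now project $P$ down to $K_{n,n}$. Its $\mu$-edges project to $M^\ast$-edges; a $\Lambda$-edge of $P$ cannot project to an $M^\ast$-edge (otherwise it would also be a $\mu$-edge of $P$, contradicting $P\subseteq\Lambda\oplus\mu$); and $P$ is non-backtracking because it descends through the tree. So the projection $\bar P$ is a non-backtracking, $M^\ast$-alternating walk of length $2t$ in $K_{n,n}$. Greedily peeling off simple cycles writes $\bar P$ as $m\ge t/n-1$ simple $M^\ast$-alternating cycles $C_1,\dots,C_m$ together with one leftover $M^\ast$-alternating path $Q$ of length at most $2n-1$. For each $C_j$, $M^\ast\oplus C_j$ is a perfect matching distinct from $M^\ast$, so $w(C_j\cap M^\ast)-w(C_j\setminus M^\ast)=w(M^\ast)-w(M^\ast\oplus C_j)\ge\varepsilon$ by the uniqueness gap. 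Flipping $\Lambda$ along all of $P$ yields a tree matching $\Lambda'=\Lambda\oplus P$ that still matches every internal node and now uses $M^\ast(r)$ at the root, with $w(\Lambda')-w(\Lambda)$ equal to the weight of the $M^\ast$-edges of $\bar P$ minus the weight of the rest, i.e. $\sum_j\big(w(C_j\cap M^\ast)-w(C_j\setminus M^\ast)\big)$ plus the analogous quantity for $Q$, hence at least $m\varepsilon$ minus a boundary defect coming only from $Q$. Since $K_{n,n}$ has diameter two, $Q$ can be closed into an $M^\ast$-alternating cycle by adding $O(1)$ edges, so that defect is only $O(\wmax)$; with the constants arranged so that $t\ge 2n\wmax/\varepsilon$, the term $m\varepsilon$ dominates and $w(\Lambda')>w(\Lambda)$, contradicting maximality of $\Lambda$.

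The crux is pinning down the constant in the final step: the naive bound on the defect of the leftover path $Q$ is $\Theta(n\wmax)$, which would inflate the iteration count by a factor of $n$, so one must work to bring it down to $O(\wmax)$ — either by closing $Q$ cheaply into an alternating cycle using the density of $K_{n,n}$, or by an amortized estimate that exploits the periodic structure of the long walk $\bar P$ — in order to land on exactly $2n\wmax/\varepsilon$. A secondary point to handle is the precise leaf/boundary convention on the computation tree and its effect on the tree-matching characterisation of beliefs used in the first step; with the convention of Bayati, Shah, and Sharma the argument goes through as stated.
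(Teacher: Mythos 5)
You should first note that this paper never proves \Cref{thm:bayatishahsharma}: it is quoted verbatim from \cite{DBLP:journals/tit/BayatiSS08}, and the paper only borrows the computation-tree/T-matching machinery (Lemma~1 of Bayati--Shah--Sharma and Weiss's unwrapped-network lemma) for its own lower bounds. So the only meaningful comparison is with the original proof, and your argument is essentially that proof: beliefs as root edges of maximum weight T-matchings, the alternating leaf--root--leaf path of length $2t$ in $\Lambda\oplus\mu$ (your degree argument for why its endpoints must be leaves and why it descends strictly is correct), projection to $K_{n,n}$, greedy peeling into simple $M^\ast$-alternating cycles plus one leftover alternating path, and a gain of at least $\varepsilon$ per cycle from the uniqueness gap. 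One step you assert without checking -- that a peeled cycle really alternates at the splice vertex and that the residual walk stays alternating -- does hold, but only because the graph is bipartite (the two splice edges sit at positions of opposite parity); it deserves a line in a complete write-up.

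The one genuine loose end is exactly the one you flag, and your proposed fix is not yet a proof. ``$K_{n,n}$ has diameter two, so $Q$ can be closed by $O(1)$ edges'' is not the right mechanism: an $M^\ast$-alternating path can be closed into an alternating cycle by a single edge only if it begins and ends with $M^\ast$-edges (so that its endpoints lie in opposite layers and become exposed after flipping); otherwise you must first trim up to two non-matching end edges, each of which can cost $\wmax$, and the closing edge can cost another $\wmax$. Carried out this way your inequality $m\varepsilon>\mathrm{defect}(Q)$ with $m\ge t/n-1$ gives convergence within roughly $3n\wmax/\varepsilon$ iterations, not the stated $2n\wmax/\varepsilon$; obtaining the constant $2$ requires the more careful endpoint bookkeeping done in \cite{DBLP:journals/tit/BayatiSS08}. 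For everything this paper actually uses (tightness up to a constant factor, cf.\ \Cref{thm:one-cycle}) the weaker constant would be harmless, but as a proof of the literal statement your sketch is incomplete at precisely that quantitative step.
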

How tight is their analysis?
\begin{theorem}\label{thm:one-cycle}
	For any $n \geq 3$, $\wmax > 0$, and $0<\varepsilon < \frac{\wmax}{4(n–2)} $ there are edge weights for the $K_{n,n}$ such that the maximum weight matching is unique and BP converges to the maximum weight matching only after $\tfrac{n {\cdot} \wmax}{2\varepsilon}$ iterations.
\end{theorem}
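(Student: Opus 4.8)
The plan is to exhibit a ``deceptive'' weighting of $K_{n,n}$ organized around a single Hamiltonian cycle. Fix the intended maximum weight matching to be the identity $M^{*}=\{\{a_i,b_i\}\}_{i\in[n]}$, and let $C$ be the $2n$-cycle $a_1b_1a_2b_2\cdots a_nb_na_1$; its two perfect matchings are $M^{*}$ (the ``$\alpha$-edges'' $\{a_i,b_i\}$) and $M'=\{\{b_i,a_{i+1}\}\}_{i\in[n]}$ (the ``$\beta$-edges'', indices mod $n$). Every edge off $C$ is assigned a small, nonpositive weight, chosen so that any perfect matching of $K_{n,n}$ using an off-$C$ edge is strictly dominated by $M'$; since a $2n$-cycle has exactly two perfect matchings, this makes $M^{*}$ and $M'$ the two best matchings of $K_{n,n}$, and the uniqueness gap equals $w(M^{*})-w(M')$, which we set to $\varepsilon$ by giving the $\alpha$-edges total weight exactly $\varepsilon$ more than the $\beta$-edges. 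The remaining freedom --- how that total weight is spread over the edges of $C$, and in particular placing one ``heavy'' cycle edge of weight close to $\wmax$ while keeping the other cycle edges large enough to kill the off-$C$ matchings --- is precisely what will delay BP, and the admissibility of such a distribution inside the window $[\,\cdot\,,\wmax]$ is where the hypothesis $\varepsilon<\tfrac{\wmax}{4(n-2)}$ enters (the slack $\varepsilon$ has to be absorbed by the $n-2$ ``free'' cycle edges).

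The second ingredient is the standard computation-tree description of max-sum beliefs. Because the off-$C$ edges are never used by a maximum-weight matching of the computation tree, the relevant part of the depth-$t$ computation tree rooted at a node $r$ is a sub-path of the universal cover of $C$: a path of length $\Theta(t)$ centered at $r$ on which every other edge is an $\alpha$-edge and the rest are $\beta$-edges, periodically. The belief of $r$ at iteration $t$ is the edge at $r$ used by a maximum-weight matching of this path; such a matching leaves a single vertex unmatched and is determined by the position of that ``defect,'' so the belief is governed by the optimal defect position $q^{*}(t)$. The core of the argument is to track $q^{*}(t)$: as the path grows, it trades the per-loop surplus $\varepsilon$ of $M^{*}$ over $M'$ against the one-time boundary bonus of order $\wmax$ available near the heavy edge. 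Choosing $r$ so that the heavy edge sits on the ``wrong'' ($M'$) side of the cover, one shows that for every $t$ below the threshold the defect stays on the wrong side and the belief at $r$ disagrees with $M^{*}$ (so BP has not converged), while at the threshold it flips to the correct side and remains there. Balancing $(\#\text{loops})\cdot\varepsilon$ against the boundary bonus, together with the linear relation between $t$ and the path length, gives a threshold of order $n\wmax/\varepsilon$; carefully pinning down the constants --- the largest boundary bonus consistent with the off-$C$ domination, and the exact depth-per-iteration of the tree --- yields the threshold $\tfrac{n\wmax}{2\varepsilon}$, a quarter of the Bayati--Shah--Sharma bound of \Cref{thm:bayatishahsharma}.

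Putting the pieces together, one verifies: (i) all chosen weights have absolute value at most $\wmax$, and $\varepsilon<\tfrac{\wmax}{4(n-2)}$ makes the weight distribution realizable; (ii) $M^{*}$ is the unique MWM with uniqueness gap exactly $\varepsilon$; and (iii) for every $t<\tfrac{n\wmax}{2\varepsilon}$ the belief at $r$ is not its $M^{*}$-edge, whereas Bayati--Shah--Sharma guarantees eventual convergence to $M^{*}$; hence BP first converges no earlier than iteration $\tfrac{n\wmax}{2\varepsilon}$.

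The main obstacle is step (iii). Matchings of a path are far from unique, and the optimal defect position need not lie at either endpoint, so one has to genuinely maximize over all near-perfect matchings of the growing path and prove that the optimizer moves across $r$ exactly at the stated time; doing this while extracting the precise constant $\tfrac12$ (hence the factor four) is the delicate part. A secondary technical point is to guarantee that no perfect matching using off-cycle edges ever beats $M'$ --- this is what pushes the cycle weights into a narrow band near $\wmax$ and is the origin of the $4(n-2)$ in the hypothesis.
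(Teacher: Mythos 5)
Your construction and overall strategy are the paper's: one long cycle whose two perfect matchings differ by exactly $\varepsilon$, a single heavy suboptimal edge of weight $\wmax$ acting as an attractor, an embedding into $K_{n,n}$ with strongly penalized off-cycle edges, and a computation-tree analysis that balances the per-period surplus $\varepsilon$ of the optimal matching against a one-time $\Theta(\wmax)$ bonus near the heavy edge. However, the two steps that actually carry the proof are left open, and one of them is framed in a way that would not go through as stated. First, your step (iii): the belief at the root is \emph{not} governed by maximum-weight near-perfect matchings of the path with an arbitrary defect position; the correct characterization (Lemma~1 of Bayati--Shah--Sharma, restated in the paper via T-matchings) requires every \emph{internal} node of the computation tree to be covered. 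On the path this leaves exactly two candidate matchings -- the one consisting of all optimal edges and the one consisting of all suboptimal edges -- so there is no interior ``defect'' to track at all. The ``delicate part'' you identify as the main obstacle is an artifact of optimizing over the wrong class of matchings, and you do not resolve it; with the correct class the whole step collapses to the single weight comparison $W_{\textrm{sub}}-W_{\textrm{opt}}=-k\varepsilon+\Delta_\ell$ (the paper's Nibbling Lemma), where choosing the root so that the tail contains the heavy edge plus one optimal edge gives $\Delta_1=\wmax-\tfrac{\wmax}{2}=\tfrac{\wmax}{2}$ -- this explicit choice (optimal edges at $\tfrac{\wmax}{2}$, heavy edge at $\wmax$) is also what pins down the constant $\tfrac12$ that you leave as ``to be carefully pinned down.''

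Second, the embedding step is under-specified in a way that matters. It does not suffice that no perfect matching of $K_{n,n}$ using an off-cycle edge beats $M'$; what is needed is that off-cycle edges never occur in maximum-weight T-matchings of the computation trees of $K_{n,n}$, which branch with degree $n-1$ and are much richer than the cycle's universal cover. With off-cycle weights that are merely ``small and nonpositive'' this is not clear, and your justification (domination among perfect matchings) does not imply it. The paper instead assigns weight $-2\wmax$ to every off-cycle edge and proves the reduction by an augmenting-path argument inside the computation tree (Proposition~\ref{prop:embed-Kpp}), shifting all weights by $2\wmax$ afterwards if nonnegativity is desired. Relatedly, your reading of the hypothesis $\varepsilon<\tfrac{\wmax}{4(n-2)}$ (``cycle weights in a narrow band near $\wmax$ to dominate off-cycle matchings'') does not match its actual role: in the paper's weighting it guarantees that the non-heavy suboptimal edge weight $\tfrac{\wmax}{2}-\tfrac{\wmax}{2(n-1)}-\tfrac{\varepsilon}{n-1}$ is nonnegative and that $\Delta_{n-1}$ stays bounded away from zero; the cycle weights sit near $\tfrac{\wmax}{2}$, not near $\wmax$.
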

\goodbreak
Thus, the bound of \Cref{thm:bayatishahsharma} cannot be improved. 
Since the ratio $\tfrac{\wmax}{\varepsilon}$ can be exponentially large in the number of input bits, \Cref{thm:one-cycle} implies that BP has an exponential worst-case convergence time.
However, demanding convergence may be too harsh since the algorithm may have found an approximate MWM (or even the MWM itself) already after relatively few iterations. 
Observe that in each iteration BP produces a partial matching consisting of all edges $\{\alpha, \beta\}$ where both endpoints believe that $\{\alpha, \beta\}$ belongs to the MWM. Hence, it is important to determine whether those partial matchings already constitute good approximations of the MWM. 
Maybe such a partial matching is not good enough, but can be completed into a good perfect matching with little additional resources. 
However, in the worst case, a sharp approximation cannot be achieved much earlier than convergence.

\begin{theorem}\label{thm:union-cycles}
	For sufficiently large $n$, for all $\wmax>0$ and $0<\varepsilon<\tfrac{\wmax}{4(n-2)}$, there are edge weights for the $K_{n,n}$ such that  every completion of a partial BP matching computed during the first 
	\[
		\min \left\{ \big(n\log(n)\big)^{\Theta\big(\sqrt{n/\log(n)}\big)} ~,~
		\Theta\Big( \tfrac{\wmax}{\sqrt{n^3/\log(n)}\cdot \varepsilon} \Big) \right\}
	\]
	iterations is $\big(1–\Theta(\nicefrac{1}{\sqrt{n\log(n)}})\big)$-approximative.
\end{theorem}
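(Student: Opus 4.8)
The plan is to force BP into a long-lived \emph{topologically obstructed} partial matching by making the instance a vertex-disjoint union of many short cycles of pairwise coprime lengths. Concretely, partition both sides of $K_{n,n}$ into $k=\Theta(\sqrt{n/\log n})$ blocks whose sizes $\ell_1<\dots<\ell_k$ are pairwise coprime and all of order $n/k=\Theta(\sqrt{n\log n})$; this is possible since, by the prime number theorem, an interval of length $\Theta(n/k)$ around $n/k$ contains far more than $k$ primes, and the rounding slack can be absorbed into one block. On the $i$th pair of blocks put a Hamilton cycle $C_i$ carrying the weights of the single-cycle hard instance of \Cref{thm:one-cycle} with $\ell_i$ in place of $n$, a within-cycle uniqueness gap $\varepsilon_i\ge\varepsilon$ with $\min_i\varepsilon_i=\varepsilon$, and maximum absolute weight $\wmax$ — the constraint $0<\varepsilon<\tfrac{\wmax}{4(n-2)}$ survives the substitution because $\ell_i<n$ — and give every remaining edge of $K_{n,n}$ the weight $-\wmax$. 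One checks that the unique MWM is $M^\star=\bigcup_i M_i$, with $M_i$ the optimal alternating matching of $C_i$, that its uniqueness gap is exactly $\varepsilon$, and that $w(M^\star)=\Theta(n\wmax)$ (the backbone weights being of order $\wmax$).

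Since every non-backbone edge has weight $-\wmax$ while every vertex touches a backbone edge of weight of order $\wmax>-\wmax$, no non-backbone edge is ever used in a maximum-weight matching of a BP computation tree; deleting these never-used edges collapses the depth-$t$ computation tree of a vertex $v\in C_i$ to that of $v$ in the \emph{standalone} instance on $C_i$, so BP decouples across the $k$ cycles and it suffices to understand the partial BP matching on a single cycle as a function of the iteration count. From the analysis behind \Cref{thm:one-cycle} I would extract that, before $C_i$ converges, the beliefs around it are out of phase, so the believed edges split into maximal ``$M_i$-arcs'' and ``$M_i'$-arcs'' meeting at one or more \emph{domain walls}, and at each domain wall a vertex is left exposed with both of its cycle-neighbours already matched. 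Such a configuration admits no completion to a perfect matching of $C_i$ using only edges of $C_i$, so it forces every completion of the partial BP matching to cover that vertex along a $-\wmax$ edge.

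To combine the cycles, note that $C_i$ fails to be obstructed only when the iteration $t$ lies in its set of ``in-phase'' iterations, which is periodic modulo $O(\ell_i)$ with density $o(1)$. Since the $\ell_i$ are pairwise coprime, a Chinese-remainder plus union-bound argument shows that the set of $t$ at which more than $\tfrac12 k$ of the cycles are simultaneously in phase has vanishing density and, crucially, no element below $\min\bigl\{\prod_i\ell_i,\ \Theta(\tfrac{\wmax}{\varepsilon\sqrt{n^3/\log n}})\bigr\}$: beyond that bound either the least common multiple $\prod_i\ell_i=(n\log n)^{\Theta(\sqrt{n/\log n})}$ has been reached, after which the in-phase iterations of all cycles can align at once, or the single-cycle obstruction windows have grown dense enough that the Chinese-remainder count no longer forces many cycles to be obstructed together. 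Hence at every iteration $t$ in that initial segment at least $(1-o(1))k$ of the cycles are obstructed; each such cycle forces at least one $-\wmax$ edge in any completion $M$, and a single $-\wmax$ edge relieves at most two cycles, so $M$ contains $\Omega(k)$ edges of weight $-\wmax$ and therefore $w(M)\le w(M^\star)-\Omega(k\wmax)=\bigl(1-\Omega(k/n)\bigr)w(M^\star)=\bigl(1-\Omega(1/\sqrt{n\log n})\bigr)w(M^\star)$; completing the partial BP matching by pairing up each obstructed cycle's exposed vertices shows the factor is also $1-O(1/\sqrt{n\log n})$, which gives the stated $\Theta$.

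The routine parts are the construction and the decoupling; the real obstacle is the middle step — upgrading the \emph{convergence-time} statement of \Cref{thm:one-cycle} into a \emph{structural} statement that pins down, at every intermediate iteration, exactly when the single-cycle partial BP matching carries a domain wall and when it does not, and shows the ``no-wall'' iterations are rare and well-spread (a sparse, period-$O(\ell_i)$ residue set with no long common initial segment across the cycles) so that Chinese remaindering over the $k$ coprime lengths keeps a constant — not merely vanishing — fraction of cycles obstructed. Reconciling this with the arithmetic demands ($k$ pairwise coprime lengths of the right order summing to $n$, mutually consistent within-cycle gaps, the $\wmax/\varepsilon$ bound inherited by each $C_i$) is where both terms inside the $\min$ acquire their precise shape, and is the bulk of the work.
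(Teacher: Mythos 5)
Your proposal follows essentially the same route as the paper: node-disjoint cycles whose lengths are $c=\Theta\big(\sqrt{n/\log n}\big)$ distinct primes of order $n/c$, each carrying the single-cycle weights from \Cref{thm:one-cycle}, embedded into $K_{n,n}$ with strongly negative non-cycle edges so that the beliefs decouple; a divisibility argument over the coprime lengths bounds how many cycles can be ``in phase'' at any early iteration; and each out-of-phase cycle leaves two exposed, non-adjacent nodes, so every completion must pay for a light edge, giving the $1-\Theta(c/n)=1-\Theta(1/\sqrt{n\log n})$ loss. The structural step you single out as the main obstacle is exactly what the paper's Nibbling Lemma (\Cref{lemma:nibbling}) in its general-$\ell$ form supplies: for $t\not\equiv 0 \bmod n_i$ and $t$ below roughly $\wmax/(8c\varepsilon)$, some node of $C_{2n_i}$ still believes in a suboptimal edge, and the partial BP matching on that cycle consists of $n_i-1$ edges with the two uncovered nodes not adjacent on the cycle. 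One small correction: the set of iterations at which more than half the cycles are simultaneously in phase can contain elements far below $\prod_i\ell_i$ (take $t$ equal to the product of just over $c/2$ of the primes); the threshold this counting actually gives is $(\min_i\ell_i)^{\Theta(c)}$, as in the paper's \Cref{lemma:technical-union-cycles}, which still yields the stated $(n\log n)^{\Theta(\sqrt{n/\log n})}$ bound.
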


We construct weights such that the partial BP matchings are almost perfect, but none of the few completions are capable of improving the matching considerably. Observe that the sharp lower bound $\tfrac{n {\cdot} \wmax}{2\varepsilon}$ for the convergence time and the time $\Theta\big( {\wmax}/{\big(\sqrt{n^3/\log(n)}\cdot \varepsilon\big)} \big)$ (see \Cref{thm:one-cycle} and \Cref{thm:union-cycles}, respectively) are closely related: tight approximations require a large portion of the convergence time, if $\tfrac{\wmax}{\varepsilon}$ dominates~$n$.

\section{Proof of \Cref{thm:one-cycle}}
\label{sec:one-cycle}

We start by motivating some of the key ideas. We first investigate the behavior of BP on the cycle $C_{2n}$ for carefully selected weights. Subsequently we embed $C_{2n}$ into $K_{n,n}$ and complete the argument for \Cref{thm:one-cycle}. 
The cycle $C_{2n}$ on $2n$ nodes (see \Cref{fig:C6} for $n=3$) has two perfect matchings, one of which is optimal, provided that the MWM is unique. Since these two matchings are edge-disjoint, the edges of $C_{2n}$ may be partitioned into \emph{optimal} and \emph{suboptimal} edges.  Now assume that there is a heavy suboptimal edge (see the thick edge $\{\alpha_1, \beta_3\}$ in \Cref{fig:C6}) which is at least twice as heavy as any other edge. It turns out that this heavy edge acts as an \emph{attractor} of suboptimal beliefs. In particular, we show in the \emph{Nibbling Lemma} (\Cref{lemma:nibbling}) that many iterations are required to rule out the heavy edge.

	\begin{figure}[t]
		\centering
		\begin{subfigure}{0.32\linewidth}
		\begin{tikzpicture}[xscale=1, yscale=0.5]	
		\draw node[tinycirc, label=left:\footnotesize $\alpha_1$] (a1) at (0, -3) {};
		\draw node[tinycirc, label=right:\footnotesize $\beta_1$] (b1) at (1, -3) {};	
		\draw node[tinycirc, label=left:\footnotesize $\alpha_2$] (a2) at (0, -4) {};
		\draw node[tinycirc, label=right:\footnotesize $\beta_2$] (b2) at (1, -4) {};	
		\draw node[tinycirc, label=left:\footnotesize $\alpha_3$] (a3) at (0,-5) {};
		\draw node[tinycirc, label=right:\footnotesize $\beta_3$] (b3) at (1,-5) {};

		\draw[bluematch] (a1) -- (b1);	
		\draw[bluematch] (a2) -- (b2);			
		\draw[bluematch] (a3) -- (b3);	
		
		\draw[redheavymatch] (a1) -- (b3);	
		\draw[redmatch] (a2) -- (b1);			
		\draw[redmatch] (a3) -- (b2);	
		\end{tikzpicture}
		\end{subfigure} 
		\begin{subfigure}{0.32\linewidth}
		\begin{tikzpicture}[xscale=1, yscale=0.5]		
		\draw node[tinycirc, label=left:\footnotesize $\alpha_1$] (a1) at (0, -3) {};
		\draw node[tinycirc, label=right:\footnotesize $\beta_1$] (b1) at (1, -3) {};	
		\draw node[tinycirc, label=left:\footnotesize $\alpha_2$] (a2) at (0, -4) {};
		\draw node[tinycirc, label=right:\footnotesize $\beta_2$] (b2) at (1, -4) {};	
		\draw node[tinycirc, label=left:\footnotesize $\alpha_3$] (a3) at (0,-5) {};
		\draw node[tinycirc, label=right:\footnotesize $\beta_3$] (b3) at (1,-5) {};

		\draw[bluematch] (a1) -- (b1);	
		\draw[bluematch] (a2) -- (b2);			
		\draw[bluematch] (a3) -- (b3);		
		\end{tikzpicture}
		\end{subfigure} 
		\begin{subfigure}{0.32\linewidth}
		\begin{tikzpicture}[xscale=1, yscale=0.5]
		\draw node[tinycirc, label=left:\footnotesize $\alpha_1$] (a1) at (0, -3) {};
		\draw node[tinycirc, label=right:\footnotesize $\beta_1$] (b1) at (1, -3) {};	
		\draw node[tinycirc, label=left:\footnotesize $\alpha_2$] (a2) at (0, -4) {};
		\draw node[tinycirc, label=right:\footnotesize $\beta_2$] (b2) at (1, -4) {};	
		\draw node[tinycirc, label=left:\footnotesize $\alpha_3$] (a3) at (0,-5) {};
		\draw node[tinycirc, label=right:\footnotesize $\beta_3$] (b3) at (1,-5) {};
	
		\draw[redheavymatch] (a1) -- (b3);	
		\draw[redmatch] (a2) -- (b1);			
		\draw[redmatch] (a3) -- (b2);	
		\end{tikzpicture}
		\end{subfigure} 
		\caption{left: the cycle $C_{2n}$ (for $n=3$); middle and right: the optimal and suboptimal matching drawn with double and solid edges, respectively.}
		\label{fig:C6}
	\end{figure}
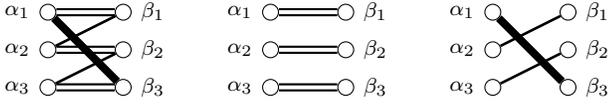		
	
	Now define the edge weights of the cycle graph $C_{2n}$ as follows. Let $n \geq 3$, $[n] \defeq \{1,\dots, n\}$, and $0<\varepsilon < \frac{\wmax}{4(n–2)}$.
	 Denote the layers of the bipartite cycle $C_{2n} \defeq (A_n, B_n, E_n)$  by $A_n \defeq \{\alpha_1, \dots, \alpha_n\}$ and $B_n \defeq \{ \beta_1, \dots, \beta_n \}$. Its edge set is $E_n \defeq  \Eopt \cup \Esub$, where $\Eopt \defeq \big\{ \{\alpha_i, \beta_i\} \mid i \in [n]\big\}$ and $\Esub \defeq \big\{ \{\alpha_{i+1}, \beta_i\} \mid i \in [n–1] \big\} \cup \big\{ \{\alpha_1, \beta_n\} \big\}$ are the sets of \emph{optimal} and \emph{suboptimal} edges, respectively.

	From now on, whenever we refer to $C_{2n}$, its edges are weighted as follows, where $\wmax > 0$ is the largest weight:
	\begin{equation}\label{eq:weights}
	w_e {\defeq} \begin{cases}
	\frac{\wmax}{2} & \text{if } e \in \Eopt, \\
	\tfrac{\wmax}{2} – \tfrac{\wmax}{2(n–1)} – \tfrac{\varepsilon}{n–1}, & \text{if } e = \{\alpha_{i+1}, \beta_i\} {\in} \Esub, \\ 
	\wmax & \text{if } e = \{\alpha_1, \beta_n\}  {\in} \Esub.
	\end{cases}
	\end{equation}
	Note that $\varepsilon \,{<}\, \frac{\wmax}{4(n–2)}$ implies $\tfrac{\wmax}{2} {-} \tfrac{\wmax}{2(n–1)} – \tfrac{\varepsilon}{n–1} \geq 0$ for all $n \,{\geq}\, 3$.
	Let $W(M) \,{\defeq}\, \sum_{e \in M} w_e$ denote the weight of a matching $M$. 
	A simple addition of the edge weights shows that $W(\Eopt) = n {\cdot} \frac{\wmax}{2}$ and $W(\Esub) = (n–1){\cdot}(\tfrac{\wmax}{2} – \tfrac{\wmax}{2(n–1)} – \tfrac{\varepsilon}{n–1}) + \wmax =  W(\Eopt) – \varepsilon$, \ie the maximum weight matching is indeed the set $\Eopt$ of optimal edges, the set $\Esub$ of suboptimal edges is the second best matching, and $\varepsilon$ is the uniqueness gap.
	
	For the remaining analysis of BP on $C_{2n}$, we need some of the concepts and arguments from the proof of \Cref{thm:bayatishahsharma} in~\cite{DBLP:journals/tit/BayatiSS08}. 
	Given an arbitrary graph $G{=}(V,E)$ -- such as $C_{2n}$ or $K_{n,n}$ -- the \emph{computation tree} (or \emph{unwrapped network}) $\T{v}{t}$ of~$v$ at iteration $t$ is constructed as follows: First, let $v$ be the root of $\T{v}{t}$. Then for any node $u$ of $\T{v}{t}$ at depth $t'<t$, make all neighbors of $u$ in $G$ children of $u$ except for its parent in the tree. Note that the depth of $\T{v}{t}$ is exactly $t$. This might differ from other literature where the iteration counter of the BP algorithm starts with $t=0$.

	Now let $\T{v}{t} = (V', E')$. A \emph{T-matching} $T' \subseteq E'$ is a partial matching in $\T{v}{t}$ where every inner node is an endpoint of an edge in $T'$. 
	One can show that the belief of $v$ in $G$ at iteration~$t$ is the same as the belief of $v$ in $\T{v}{t}$ (cf. the unwrapped network lemma in \cite{DBLP:journals/neco/Weiss00}) and that the belief of $v$ in $\T{v}{t}$ is the edge incident with~$v$ in a maximum weight T-matching (cf. Lemma 1 in \cite{DBLP:journals/tit/BayatiSS08}). Thus, for the analysis of BP in $G$, it suffices to only consider maximum weight T-matchings. 	
	
	For $G=C_{2n}$ the situation is simple since every computation tree is a path. 
	Consider the computation tree $\T{v}{t}$ for some iteration $t = kn + \ell$ where $k \geq 0$ and $0 \leq \ell \leq n–1$. Beginning with a leaf, $\T{v}{t}$ is partitioned into $k$ copies of $C_{2n}$ and an incomplete copy, called a \emph{tail}, of $2\ell$ edges (see \Cref{fig:computationtrees}).

	\begin{figure}[t]
		\centering 
		\begin{subfigure}{0.37\linewidth}
		\centering 
		\begin{tikzpicture}
			[xscale=0.5, yscale=0.5]

			\draw node[tinycirc, label=left:\footnotesize $\alpha_1$] (a10) at (0, 0) {};
			\draw node[tinycirc, label=left:\footnotesize $\beta_1$] (b11l) at (-1, -1) {};	
			\draw node[tinycirc, label=right:\footnotesize $\beta_3$] (b21r) at (1, -1) {};	
			\draw node[tinycirc, label=left:\footnotesize $\alpha_2$] (a22l) at (-1, -2) {};	
			\draw node[tinycirc, label=right:\footnotesize $\alpha_3$] (a22r) at (1, -2) {};	
			\draw node[tinycirc, label=left:\footnotesize $\beta_2$] (b23l) at (-1, -3) {};	
			\draw node[tinycirc, label=right:\footnotesize $\beta_2$] (b13r) at (1, -3) {};	
			\draw node[tinycirc, label=left:\footnotesize $\alpha_3$] (a34l) at (-1, -4) {};	
			\draw node[tinycirc, label=right:\footnotesize $\alpha_2$] (a24r) at (1, -4) {};			
			
			\draw [decorate,decoration={brace,amplitude=5pt},xshift=4pt,yshift=0pt]
			(0.5,-4) -- (0.5,-2) node [black,midway,xshift=-0.35cm, rotate=90] 
			{\footnotesize tail};	

			\draw[bluematch] (a10) -- (b11l);	
			\draw[bluematch] (a22l) -- (b23l);	
			\draw[bluematch] (b21r) -- (a22r);	
			\draw[redmatch] (b23l) -- (a34l);
			
			\draw[redheavymatch] (a10) -- (b21r);	
			\draw[redmatch] (b11l) -- (a22l);	
			\draw[redmatch] (a22r) -- (b13r);	
			\draw[bluematch] (b13r) -- (a24r);					
		\end{tikzpicture}
		\end{subfigure}
		\begin{subfigure}{0.37\linewidth}
			\centering 
			\begin{tikzpicture}
			[xscale=0.5, yscale=0.5]

			\draw node[tinycirc, label=left:\footnotesize $\alpha_2$] (a20) at (0, 0) {};
			\draw node[tinycirc, label=left:\footnotesize $\beta_1$] (b11l) at (-1, -1) {};	
			\draw node[tinycirc, label=right:\footnotesize $\beta_2$] (b21r) at (1, -1) {};	
			\draw node[tinycirc, label=left:\footnotesize $\alpha_1$] (a12l) at (-1, -2) {};	
			\draw node[tinycirc, label=right:\footnotesize $\alpha_3$] (a32r) at (1, -2) {};	
			\draw node[tinycirc, label=left:\footnotesize $\beta_3$] (b33l) at (-1, -3) {};	
			\draw node[tinycirc, label=right:\footnotesize $\beta_3$] (b33r) at (1, -3) {};			
			\draw node[tinycirc, label=left:\footnotesize $\alpha_3$] (a34l) at (-1, -4) {};	
			\draw node[tinycirc, label=right:\footnotesize $\alpha_1$] (a14r) at (1, -4) {};	
		
			\draw [decorate,decoration={brace,amplitude=5pt},xshift=4pt,yshift=0pt]
			(0.5,-4) -- (0.5,-2) node [black,midway,xshift=-0.35cm, rotate=90] 
			{\footnotesize tail};
		
			\draw[redmatch] (a20) -- (b11l);	
			\draw[bluematch] (a20) -- (b21r);	
			\draw[bluematch] (a12l) -- (b11l);	
			\draw[redmatch] (b21r) -- (a22r);	
			\draw[bluematch] (a32r) -- (b33r);	
			\draw[redheavymatch] (a12l) -- (b33l);	
			\draw[bluematch] (b33l) -- (a34l);	
			\draw[redheavymatch] (b33r) -- (a14r);	
			\end{tikzpicture}
		\end{subfigure}
		\begin{subfigure}{0.2\linewidth}
		\centering 
		\begin{tikzpicture}
		[xscale=0.5, yscale=0.5]
		
			\draw node[tinycirc] (aiopt) at (0, -1) {};
			\draw node[tinycirc] (biopt) at (2, -1) {};
			\draw[bluematch] (aiopt) -- node[above] {\footnotesize $w_{\text{opt}}$} (biopt);
			
			\draw node[tinycirc] (aisub) at (0, -2.5) {};
			\draw node[tinycirc] (bisub) at (2, -2.5) {};
			\draw[redmatch] (aisub) -- node[above] {\footnotesize $w_{\text{sub}}$} (bisub);			
			
			\draw node[tinycirc] (aiheavy) at (0, -4) {};
			\draw node[tinycirc] (biheavy) at (2, -4) {};
			\draw[redheavymatch] (aiheavy) -- node[above] {\footnotesize $\wmax$} (biheavy);			
		\end{tikzpicture}
	\end{subfigure}	
		\caption{computation trees $\T{\alpha_1}{4}$ and $\T{\alpha_2}{4}$ with edge weights $w_{\text{opt}} = \tfrac{\wmax}{2}$,~ $w_{\text{sub}} = \tfrac{\wmax}{2} {-} \tfrac{\wmax}{2(n–1)} {-} \tfrac{\varepsilon}{n–1}$, and $\wmax$.}
		\label{fig:computationtrees}
	\end{figure}
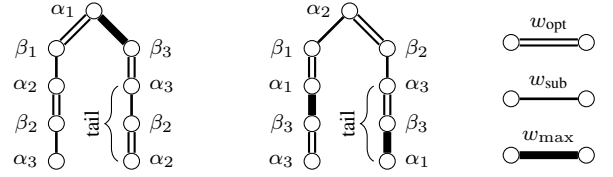
	
	\begin{example}
		Let $n=3$. Consider the cycle $C_6$ and its computation trees $\T{\alpha_1}{4}$ and $\T{\alpha_2}{4}$ as depicted in \Cref{fig:computationtrees}. 
		The maximum weight T-matching in $\T{\alpha_1}{4}$ has the weight $4\cdot \tfrac{\wmax}{2}$ compared to the suboptimal weight of $\wmax + 3\cdot(\tfrac{\wmax}{4} – \tfrac{\varepsilon}{2})$. As a consequence the root $\alpha_1$ of $\T{\alpha_1}{4}$ correctly believes that $\{\alpha_1, \beta_1\}$ belongs to the MWM in $C_6$. On the other hand, 
		the root $\alpha_2$ of $\T{\alpha_2}{4}$ falsely believes that $\{\alpha_2, \beta_1 \}$ is an edge of the MWM in $C_6$
		since the heavy edge in the tail outweighs the $\varepsilon$-advantage of the optimal edges in the copy of $C_6$.
	\end{example}
	\goodbreak 
	
	The following lemma generalizes this observation. 
	
	\begin{lemma}[Nibbling Lemma]\label{lemma:nibbling}
	For every iteration $t = kn + \ell$ with $k \geq 0$ and $1 \leq \ell \leq n–1$, there is a node $v$ such that the computation tree $\T{v}{t}$ consists of $k$ copies of the cycle $C_{2n}$ and a tail of length $2\ell$, where the tail contains the heavy edge $\{\alpha_1, \beta_n\}$. 
	Let $W_\textrm{\normalfont opt}(\T{v}{t})$ and $W_\textrm{\normalfont sub}(\T{v}{t})$ be the weights of the optimal and suboptimal edges in $\T{v}{t}$, respectively. Then
	\begin{equation}\label{eq:nibbling}
			W_\textrm{\normalfont sub}(\T{v}{t}) – W_\textrm{\normalfont opt}(\T{v}{t})  = –k\varepsilon + \Delta_\ell,
	\end{equation}
	where $\tfrac{\wmax}{2} = \Delta_1 > \cdots > \Delta_{n–1} > \tfrac{\wmax}{4(n-1)}$.
	\end{lemma}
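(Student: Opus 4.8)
The plan is to first fix the combinatorial shape of the computation tree and then carry out the weight bookkeeping.

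\textbf{Shape of $\T{v}{t}$.} Because $C_{2n}$ is $2$-regular, every computation tree $\T{v}{t}$ is a path with $2t$ edges. Reading this path from one leaf to the other traces a non-backtracking walk in $C_{2n}$: away from the root one always proceeds to the neighbour other than the parent, and at the root the two arms leave into the two distinct neighbours of $v$, so no step is reversed. A non-backtracking walk in a cycle turns consistently in one direction, so every block of $2n$ consecutive edges of the path uses each edge of $C_{2n}$ exactly once. Writing $2t = 2kn + 2\ell$, the first $2kn$ edges therefore split into $k$ copies of $C_{2n}$ and the last $2\ell$ edges form the tail. Along the cycle the edges alternate between $\Eopt$ and $\Esub$ -- optimal, suboptimal, \dots, the last being the heavy edge $\{\alpha_1,\beta_n\}$ -- so any $2\ell$ consecutive edges contain exactly $\ell$ optimal and $\ell$ suboptimal edges. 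Since the tail shifts cyclically with the choice of $v$ and $\ell\ge 1$, I pick $v$ so that the heavy edge lies in the tail; as $2\ell\le 2n-2<2n$ it then lies there exactly once, so the tail comprises $\ell$ optimal edges, $\ell-1$ ordinary suboptimal edges, and the heavy edge.

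\textbf{Weights.} Each copy of $C_{2n}$ in $\T{v}{t}$ carries all of $\Eopt$ and all of $\Esub$, and by the identities recorded above $W(\Esub)-W(\Eopt)=-\varepsilon$, so the $k$ copies contribute $-k\varepsilon$ to $W_{\mathrm{sub}}(\T{v}{t})-W_{\mathrm{opt}}(\T{v}{t})$. Letting $\Delta_\ell$ be the contribution of the tail to that same difference and substituting the weights of \eqref{eq:weights},
\[
\Delta_\ell \;=\; \wmax + (\ell-1)\Big(\tfrac{\wmax}{2}-\tfrac{\wmax}{2(n-1)}-\tfrac{\varepsilon}{n-1}\Big) - \ell\tfrac{\wmax}{2} \;=\; \tfrac{\wmax}{2}\cdot\tfrac{n-\ell}{n-1}-\tfrac{(\ell-1)\varepsilon}{n-1},
\]
which is exactly \eqref{eq:nibbling}. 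From this closed form $\Delta_\ell-\Delta_{\ell+1}=\tfrac{\wmax/2+\varepsilon}{n-1}>0$, so $\Delta_1>\cdots>\Delta_{n-1}$; at $\ell=1$ one reads off $\Delta_1=\tfrac{\wmax}{2}$; and at $\ell=n-1$ one gets $\Delta_{n-1}=\tfrac{\wmax-2(n-2)\varepsilon}{2(n-1)}$, which exceeds $\tfrac{\wmax}{4(n-1)}$ precisely when $\varepsilon<\tfrac{\wmax}{4(n-2)}$ -- the only point at which the hypothesis on $\varepsilon$ is used.

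\textbf{Expected difficulty.} The delicate step is the structural one: confirming that a cycle's computation tree unrolls into $k$ full copies of $C_{2n}$ followed by a $2\ell$-edge tail, that $v$ may be chosen so the heavy edge falls inside the tail, and that the optimal/suboptimal alternation determines the exact makeup of the tail. Everything after that is the elementary arithmetic above, including checking the chain $\tfrac{\wmax}{2}=\Delta_1>\cdots>\Delta_{n-1}>\tfrac{\wmax}{4(n-1)}$.
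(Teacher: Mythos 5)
Your proof is correct and follows essentially the same route as the paper's: decompose $\T{v}{t}$ into $k$ copies of $C_{2n}$ (each contributing $-\varepsilon$) plus a tail of $2\ell$ edges containing the heavy edge, compute $\Delta_\ell$ from \eqref{eq:weights}, and check $\tfrac{\wmax}{2}=\Delta_1>\cdots>\Delta_{n-1}>\tfrac{\wmax}{4(n-1)}$ using $\varepsilon<\tfrac{\wmax}{4(n-2)}$. The only difference is presentational: you spell out the non-backtracking-walk argument for the path structure of the computation tree and the choice of $v$, which the paper simply asserts.
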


	We interpret the Nibbling Lemma as follows. Whenever the tail is nonempty, \ie $\ell \neq 0$, and the tail contains the heavy edge, the suboptimal edges have an advantage of $\Delta_\ell > 0$ in the tail. On the other hand, the higher the number $t$ of iterations, the larger the number $k$ of copies of $C_{2n}$ in $\T{v}{t}$. Since the weight difference is $–k\varepsilon + \Delta_{\ell}$, each of the $k$ copies ``nibbles off'' an~$\varepsilon$ from $\Delta_\ell$. Hence, if $k$ is large enough, $k\varepsilon > \Delta_\ell$ follows for all $\ell \in [n–1]$, and therefore BP converges. 
	
	For the proof of \Cref{thm:one-cycle}, it suffices to consider the case when the tail consists only of the heavy edge and an optimal edge ($\ell = 1$). However, a general version of the Nibbling Lemma is required in the proof of \Cref{thm:union-cycles}. %
	
	\begin{proof}[Proof of \Cref{lemma:nibbling}]
		Let $v$ be some node such that $\T{v}{t}$ contains the heavy edge in its tail. Since $t=kn + \ell$, the computation tree consists of $k$ copies of $C_{2n}$ and a tail of length $2\ell$. The optimal matching on $C_{2n}$ has an advantage of $\varepsilon$ over the suboptimal matching for each copy. However, restricted to the tail, the suboptimal matching wins by 
		\begin{align}
			\Delta_\ell &\stackrel{\eqref{eq:weights}}{\defeq} \left((\ell – 1) {\cdot} \Big(\tfrac{\wmax}{2} – \tfrac{\wmax}{2(n–1)} – \tfrac{\varepsilon}{n–1}\Big) + \wmax\right) – \ell {\cdot} \tfrac{\wmax}{2} \\
			&= \wmax \cdot \tfrac{n–\ell}{2(n–1)} – \varepsilon \cdot \tfrac{\ell–1}{n–1}.
		\end{align}
		Now observe that $\Delta_1 > \dots > \Delta_{n–1}$ is a strictly decreasing sequence which is bounded by $\Delta_1 = \frac{\wmax}{2}$ from above and, due to $\varepsilon < \tfrac{\wmax}{4(n–2)}$ and $n \geq 3$, by
		\begin{align}
			\Delta_{n–1} & =  \tfrac{\wmax}{2(n–1)} – \varepsilon \cdot \tfrac{n–2}{n–1} 
			> \tfrac{\wmax}{2(n–1)} – \tfrac{\wmax}{4(n–1)} = \tfrac{\wmax}{4(n-1)}
		\end{align}
		from below.
 \hfill ${\qedhere_{\text{\Cref{lemma:nibbling}}}}$
	\end{proof}
	\goodbreak
	
	We now show that the weights as defined in~\eqref{eq:weights} force the upper bound in \Cref{thm:bayatishahsharma} to be tight. 
	\begin{proof}[Proof of \Cref{thm:one-cycle}]
	We start our analysis with the graph $C_{2n}$ and explain how 
	the lower bound of $\tfrac{n{\cdot}\wmax}{2\varepsilon}$ for the number of iterations follows from  \Cref{lemma:nibbling}. Consider the largest integer~$k$ such that 
	\begin{align}
		kn+1 < \tfrac{n{\cdot}\wmax}{2\varepsilon}
	\end{align}
	holds. Observe that for $t = kn+1$ there is a node $v$ such that the computation tree $\T{v}{t}$ contains $k$ copies of the cycle $C_{2n}$ and a tail with one copy of $\{\alpha_1, \beta_n\}$ and $\{ \alpha_1, \beta_1 \}$ each. Then~$v$ has a suboptimal belief since $k < \tfrac{\wmax}{2\varepsilon}$ and \eqref{eq:nibbling} from the Nibbling Lemma imply
	\begin{align}
		W_\textrm{\normalfont sub}(\T{v}{t}) – W_\textrm{\normalfont opt}(\T{v}{t})  = –k\varepsilon + \Delta_1
			> –\tfrac{\wmax}{2\varepsilon} {\cdot} \varepsilon + \tfrac{\wmax}{2} = 0.
	\end{align}
	On the other hand, for $k > \tfrac{\wmax}{2\varepsilon}$, BP converges since 
	\begin{align}
	W_\textrm{\normalfont sub}(\T{v}{t}) – W_\textrm{\normalfont opt}(\T{v}{t})  = –k\varepsilon + \Delta_1
	< –\tfrac{\wmax}{2\varepsilon} \cdot \varepsilon + \tfrac{\wmax}{2} = 0.
	\end{align}
	Hence, \Cref{thm:one-cycle} holds for the graph $C_{2n}$.
		
	In order to prove the original version of the theorem, we embed $C_{2n}$ into the complete bipartite graph $K_{n,n}$, where every cycle edge is weighted as in \eqref{eq:weights} and (in a slight abuse of the notation for $\wmax$) every noncycle edge $e$ receives the weight $w_e = –2\wmax$. We call any such edge a \emph{light edge}. 
	
	\begin{proposition}\label{prop:embed-Kpp}
		In every iteration of BP, every node $v$ in $C_{2n}$ has exactly the same belief as $v$ in $K_{n,n}$.
	\end{proposition}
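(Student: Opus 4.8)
The plan is to reason with computation trees. By the unwrapped-network lemma together with Lemma~1 of~\cite{DBLP:journals/tit/BayatiSS08} recalled above, the belief of a node $v$ in a graph $G$ at iteration $t$ is the edge incident with $v$ in a maximum-weight T-matching of $\T{v}{t}$, so it suffices to compare $\T{v}{t}$ for $G=C_{2n}$ with $\T{v}{t}$ for $G=K_{n,n}$. The former sits inside the latter as the \emph{cycle skeleton} $P$ ---the subtree spanned by the cycle edges--- while the remaining edges of $\T{v}{t}|_{K_{n,n}}$ organise into \emph{pendant subtrees}, each attached to $P$ (or to a shallower pendant subtree) through exactly one light edge. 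The proposition then reduces to the single claim that \emph{no maximum-weight T-matching of $\T{v}{t}|_{K_{n,n}}$ uses a light edge}. Granting this, a light-edge-free T-matching $M$ decomposes: the unique edge leaving a pendant subtree $S$ is its light edge, which $M$ avoids, so $M$ saturates the root of $S$ inside $S$; hence $M\cap E(P)$ and each $M\cap E(S)$ is itself a T-matching, and any such combination is a T-matching of $\T{v}{t}|_{K_{n,n}}$. Maximising the parts independently, the maximiser has edge at $v\in P$ equal to the edge at $v$ in a maximum-weight T-matching of $P=\T{v}{t}|_{C_{2n}}$, i.e.\ the belief in $C_{2n}$.

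To prove the claim I would run an exchange argument on a maximum-weight T-matching $M^\star$ of $\T{v}{t}|_{K_{n,n}}$ that uses a light edge. A light edge to a leaf is easy: since every node of $C_{2n}$ has a cycle-neighbour other than its tree-parent, the node has an unmatched leaf cycle-child, and rematching along it gains $2\wmax>0$. For a light edge $e=\{u,c\}$ with child endpoint $c$ of maximal depth ---below which $M^\star$ is already cycle-only--- I would delete $e$ and re-saturate $c$ by augmenting downward along cycle edges, matching the deficient node at each step to the cycle-child reached by an \emph{optimal} edge; the walk is then forced, the deficiency drops two levels per step, and a leaf absorbs it. Along the walk the matching rotates between the optimal and suboptimal configurations, and because $W(\Esub)=W(\Eopt)-\varepsilon$ one full turn changes the weight by only $\pm\varepsilon$: the added and removed edges pair into gains of $\mathrm{opt}-\mathrm{sub}=\tfrac{\wmax}{2(n-1)}+\tfrac{\varepsilon}{n-1}$, with at most one pair per turn meeting the heavy edge $\{\alpha_1,\beta_n\}$ and losing $\tfrac{\wmax}{2}$; since $\mathrm{opt}-\mathrm{sub}>\tfrac{\wmax}{2n}$ and $(\mathrm{opt}-\mathrm{sub})+\tfrac{\wmax}{2}<\wmax$ for $n\ge 3$ under the hypothesis $\varepsilon<\tfrac{\wmax}{4(n-2)}$, the net change of the re-saturation is strictly larger than $-\wmax$. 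Re-saturating $u$ analogously (strictly larger than $-\wmax$ again) and adding the $+2\wmax$ from deleting $e$ gives a strictly heavier T-matching, contradicting optimality of $M^\star$.

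The main obstacle is exactly this bookkeeping. The delicate case is the re-saturation of the \emph{parent} endpoint $u$: if $u$'s tree-parent already lies on $u$'s optimal edge, no downward augmenting walk can start along an optimal edge, and one must instead push the deficiency upward toward the root or through further light edges, which needs an additional argument ---most cleanly an induction on the number of light edges used by $M^\star$, or a reformulation through the recursion $f(x)=$ best weight of a T-matching of the subtree at $x$ with $x$ matched and $h(x)=$ the same with $x$ unmatched, where the claim becomes that a light child never attains the maximum defining $f(x)$ and is proved by induction on $t-\operatorname{depth}(x)$ with an auxiliary bound comparing $f(x)-h(x)$ across sibling subtrees. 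The heavy edge $\{\alpha_1,\beta_n\}$, which disrupts the clean opt/sub cancellation, is also what forces the constants to be tracked rather than merely estimated.
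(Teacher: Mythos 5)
Your reduction and your exchange argument are exactly the route the paper takes: its own (sketched) proof of \Cref{prop:embed-Kpp} also reduces everything to the claim that no maximum-weight T-matching in the computation tree of $K_{n,n}$ contains a light edge, and establishes it by flipping along an alternating path through the light edge whose remaining edges alternate between optimal and suboptimal cycle edges. Your skeleton-plus-pendant-subtree decomposition and your downward re-saturation cascades are a faithful, more detailed version of that sketch; the child-side bookkeeping (choose the deepest light edge, start the cascade with the optimal child edge, lose at most $\wmax/2$) is correct.

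The one case you leave open --- re-saturating the parent endpoint $u$ when its optimal neighbour is its tree-parent --- closes inside your own framework and needs neither an induction on the number of light edges nor the $f/h$ recursion. Run the same forced downward cascade from $u$, starting with its suboptimal child edge: then \emph{all} added edges are of suboptimal type and all removed edges are optimal, so the heavy edge $\{\alpha_1,\beta_n\}$ now appears on the \emph{added} side as a gain of $\wmax/2$, each full turn of the cycle costs exactly $W(\Eopt)-W(\Esub)=\varepsilon$, and a partial turn costs at most $\wmax/2+\varepsilon$. A cascade descends two levels per add/remove pair, so there are at most $t/(2n)$ full turns and this side loses at most $\bigl(t/(2n)+1\bigr)\varepsilon+\wmax/2$; adding the $+2\wmax$ from deleting the light edge and the $\geq-\wmax/2$ child side, the flip is strictly improving whenever $\bigl(t/(2n)+1\bigr)\varepsilon<\wmax$, i.e.\ for every $t$ up to (essentially) the convergence bound of \Cref{thm:bayatishahsharma} --- which covers the whole range in which the proposition is actually invoked in \Cref{thm:one-cycle,thm:union-cycles}. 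If you want the statement literally for every iteration, a $t$-independent variant is cleaner: the copies of the optimal edges form a perfect matching $O$ of $\T{v}{t}$; take the alternating component of $M\triangle O$ through the light edge, observe that both its end edges lie in $O$, and note that on each monotone piece of the path the projection to $C_{2n}$ is non-backtracking, so at most one heavy copy occurs per $n-1$ suboptimal copies; a short computation then shows the flip gains at least roughly $2\wmax$ per light edge minus $\wmax$, hence is strictly improving independently of $t$.
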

	\begin{figure}[t]
			\centering 
			\begin{tikzpicture}
			[xscale=0.35, yscale=0.7]
			
			\draw[line width=12pt, gray!50, rounded corners=2pt] (-3,-3)  -- (-2,-2) -- (0,-1) -- (0,0) -- (8,-1) -- (10,-2) -- (11,-3);
			
			\draw node[draw=gray!50, circle, line width=0pt, fill=gray!50, inner sep=0pt, minimum width=12pt] (0gray) at (0,0) {~};
			
			\draw node[draw=gray!50, circle, line width=0pt, fill=gray!50, inner sep=0pt, minimum width=12pt] (0gray) at (-3,-3) {~};
			
				\draw node[draw=gray!50, circle, line width=0pt, fill=gray!50, inner sep=0pt, minimum width=12pt] (0gray) at (11,-3) {~};
			
			\draw node[tinycirc, label=above:\footnotesize $\alpha_1$] (0) at (0, 0) {};
			\draw node[tinycirc, label=left:\footnotesize $\beta_1$] (1) at (-8, -1) {};	
			\draw node[tinycirc, label=left:\footnotesize $\beta_2$] (2) at (0, -1) {};	
			\draw node[tinycirc, label=right:\footnotesize $\beta_3$] (3) at (8, -1) {};	
			\draw node[tinycirc, label=left:\footnotesize $\alpha_2$] (4) at (-10, -2){};				
			\draw node[tinycirc, label=left:\footnotesize $\alpha_3$] (5) at (-6, -2){};				
			\draw node[tinycirc, label=left:\footnotesize $\alpha_2$] (6) at (-2, -2){};				
			\draw node[tinycirc, label=left:\footnotesize $\alpha_3$] (7) at (2, -2){};				
			\draw node[tinycirc, label=left:\footnotesize $\alpha_2$] (8) at (6, -2){};				
			\draw node[tinycirc, label=left:\footnotesize $\alpha_3$] (9) at (10, -2){};				
			\draw node[tinycirc, label=below:\footnotesize $\beta_2$] (10) at (-11, -3){};				
			\draw node[tinycirc, label=below:\footnotesize $\beta_3$] (11) at (-9, -3){};				
			\draw node[tinycirc, label=below:\footnotesize $\beta_2$] (12) at (-7, -3){};				
			\draw node[tinycirc, label=below:\footnotesize $\beta_3$] (13) at (-5, -3){};				
			\draw node[tinycirc, label=below:\footnotesize $\beta_1$] (14) at (-3, -3){};				
			\draw node[tinycirc, label=below:\footnotesize $\beta_3$] (15) at (-1, -3){};				
			\draw node[tinycirc, label=below:\footnotesize $\beta_1$] (16) at (1, -3){};				
			\draw node[tinycirc, label=below:\footnotesize $\beta_3$] (17) at (3, -3){};				
			\draw node[tinycirc, label=below:\footnotesize $\beta_1$] (18) at (5, -3){};				
			\draw node[tinycirc, label=below:\footnotesize $\beta_2$] (19) at (7, -3){};				
			\draw node[tinycirc, label=below:\footnotesize $\beta_1$] (20) at (9, -3){};				
			\draw node[tinycirc, label=below:\footnotesize $\beta_2$] (21) at (11, -3){};
			
			\draw[bluematch] (0) -- (1);
			\draw[light] (0) -- (2);	
			\draw[redheavymatch] (0) -- (3);
			
			\draw[redmatch] (1) -- (4);
			\draw[light] (1) -- (5);
			\draw[bluematch] (2) -- (6);
			\draw[redmatch] (2) -- (7);
			\draw[light] (3) -- (8);
			\draw[bluematch] (3) -- (9);
			
			\draw[bluematch] (4) -- (10);
			\draw[light] (4) -- (11);
			\draw[redmatch] (5) -- (12);
			\draw[bluematch] (5) -- (13);
			\draw[redmatch] (6) -- (14);
			\draw[light] (6) -- (15);
			\draw[light] (7) -- (16);
			\draw[bluematch] (7) -- (17);
			\draw[redmatch] (8) -- (18);
			\draw[bluematch] (8) -- (19);
			\draw[light] (9) -- (20);
			\draw[redmatch] (9) -- (21);

			\end{tikzpicture}
			\caption{the augmenting path argument from \Cref{prop:embed-Kpp} where light edges are depicted with dotted edges; suppose a T-matching contained the light edge $\{\alpha_1, \beta_2\}$ at the root; then flipping the edges along the path increases the weight of the T-matching.}
			\label{fig:illustration-of-proposition}
	\end{figure}
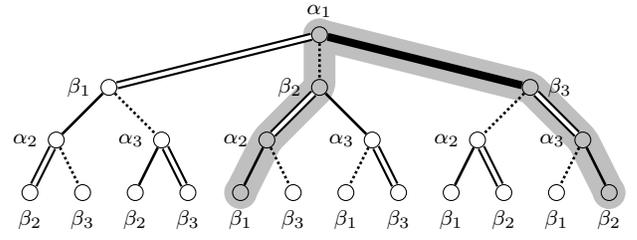
	
	Note that the argument holds for non-negative weights as well when $2\wmax$ is added to the weight of each edge.
	\begin{proof}[Proof of \Cref{prop:embed-Kpp} (sketch)]
		We show that for every computation tree, every maximum weight T-matching does not contain a light edge. Otherwise the weight of this T-matching could be increased using an augmenting path argument (see \Cref{fig:illustration-of-proposition}) where the path contains the light edge, as well as suboptimal and optimal edges alternately. Hence, maximum weight T-matchings only contain cycle edges.
		\hfill $\qedhere_{\text{\Cref{prop:embed-Kpp}}}$
	\end{proof}
	\noindent Now the claim follows for $K_{n,n}$.		
	\hfill $\qedhere_{\text{Theorem 2}}$
	\end{proof}

	\section{Proof of \Cref{thm:union-cycles}}\label{sec:union-cycles}
		
	\Cref{thm:one-cycle}, in conjunction with the upper bound of Bayati, Shah, Sharma (\Cref{thm:bayatishahsharma}), characterizes the worst-case convergence time. However, BP already computes the maximum weight matching in $C_{2n}$ after performing $t=n$ iterations since every computation tree at time $n$ does not have a tail. 
	Instead of using a single cycle, we build a graph using multiple node-disjoint cycles of length $2n_i$ where $n_1 <  \dots < n_c$ are prime numbers with the same order of magnitude. The convergence time for cycle $C_{2n_i}$ coincides with $\tfrac{2n_i\wmax}{\varepsilon}$, but the construction prevents BP from finding a perfect matching in unions of cycles as an intermediate solution. Based on this observation, we will see that even partial intermediate solutions cannot be completed to give matchings with a weight close to the weight of the MWM.
	
		\goodbreak 
	\begin{proof}[Proof of \Cref{thm:union-cycles}]
	We begin our investigation on the cycle construction with the following lemma.
	
	\begin{lemma}[Dusart, \cite{dusart1998autour}]\label{lemma:primecounting}
		For every $n \geq 599$, the number $\pi(n)$ of prime numbers less than or equal $n$ is bounded by
		\begin{align}
			\tfrac{n}{\log(n)} \left(1+\tfrac{1}{\log(n)} \right)  \leq  \pi(n) \leq  \tfrac{n}{\log(n)} \left(1+\tfrac{1.2762}{\log(n)} \right)
		\end{align}

	\end{lemma}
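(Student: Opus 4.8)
The plan is short and honest: this is Dusart's explicit form of the prime number theorem, so for the purposes of \Cref{thm:union-cycles} we simply invoke \cite{dusart1998autour} as a black box. It is nonetheless worth recalling the shape of the argument, since the numerical constant $1.2762$ does not fall out of thin air. One works not with $\pi$ directly but with the Chebyshev functions $\vartheta(x) = \sum_{p \leq x} \log p$ and $\psi(x) = \sum_{p^k \leq x} \log p$, for which the von Mangoldt explicit formula $\psi(x) = x - \sum_{\rho} \tfrac{x^\rho}{\rho} - \log(2\pi) - \tfrac12 \log(1 - x^{-2})$ ties the error term to the nontrivial zeros $\rho$ of the Riemann zeta function. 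The heart of the matter is bounding $\sum_{\rho} x^\rho/\rho$ effectively: Dusart combines a classical zero-free region for $\zeta$ with the numerical verification that all zeros up to a large height lie on the critical line, which yields $|\psi(x) - x| \leq \eta(x)\, x$ for an explicit, slowly decaying $\eta(x)$.

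From there one transfers the estimate along the chain $\psi \to \vartheta \to \pi$. Passing from $\psi$ to $\vartheta$ costs only the prime-power terms $\vartheta(x^{1/2}) + \vartheta(x^{1/3}) + \cdots = O(\sqrt{x})$, and passing from $\vartheta$ to $\pi$ uses partial summation, $\pi(x) = \tfrac{\vartheta(x)}{\log x} + \int_2^x \tfrac{\vartheta(t)}{t \log^2 t}\, dt$, which upon iteration produces the asymptotic expansion $\pi(x) = \tfrac{x}{\log x}\bigl(1 + \tfrac{1}{\log x} + \tfrac{2}{\log^2 x} + \cdots\bigr)$. Truncating after the $\tfrac{1}{\log x}$ term and folding the remaining tail together with the effective error $\eta(x)$ into a single constant gives the two-sided bound; the value $1.2762$ is precisely what one obtains for the upper constant once $n \geq 599$, and the lower constant $1$ is comfortably valid in the same range. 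For the finitely many $n$ below whichever analytic threshold the argument naturally produces, one simply tabulates $\pi(n)$ and checks the inequality by direct computation.

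The main obstacle — and the reason we cite rather than reprove — is that every step above must be made \emph{explicit}: one needs concrete numerical values for the zero-free region, for the height up to which the Riemann hypothesis has been verified, and for the propagation of all error constants through the partial-summation steps. None of this is conceptually deep, but the bookkeeping is delicate, and it is exactly the content of \cite{dusart1998autour}; reproducing it here would add nothing to the present paper, which uses \Cref{lemma:primecounting} only to guarantee a sufficient supply of primes $n_1 < \dots < n_c$ of the same order of magnitude.
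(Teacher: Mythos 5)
Your treatment matches the paper exactly: \Cref{lemma:primecounting} is an attributed external result, and the paper likewise offers no proof, simply citing \cite{dusart1998autour}, which is what you do (your sketch of the Chebyshev-function/explicit-formula route is accurate background but not required). Nothing is missing; invoking Dusart as a black box is the intended and correct approach here.
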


	For the rest of the proof let $n$ be sufficiently large and $c \leq \tfrac{1}{2}\sqrt{{n}/{\log(n)}}$. 
	Let $P_{n,c}$ denote the set of prime numbers in the open interval $(\tfrac{n}{2c}, \tfrac{n}{c})$. We apply \Cref{lemma:primecounting}, obtain $|P_{n,c}| = \pi(\tfrac{n}{c}) – \pi(\tfrac{n}{2c}) >  \tfrac{n}{4c\log(n)}$, and  $|P_{n,c}| \geq c$ follows.
	Now select $c$ prime numbers $n_1 <  {\cdots} < n_c$ from $P_{n,c}$ and let $C_{2n_1}, \dots, C_{2n_c}$ be node-disjoint cycles with weights as described in \eqref{eq:weights}. 

	The next lemma states that BP fails for many cycles within a large number of iterations. 
	\begin{lemma}\label{lemma:technical-union-cycles} For the cycles $C_{n_1}, \dots, C_{n_c}$ the following holds:
If $t \leq \min \{ \tfrac{\wmax}{8c\varepsilon}, \lfloor (\tfrac{n}{2c})^{\nicefrac{c}{2}} \rfloor \}$, there are at least $\tfrac{c}{2}$ cycles such that BP does not find a perfect matching for any of these cycles.
	\end{lemma}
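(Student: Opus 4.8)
The plan is to combine the Nibbling Lemma, which governs each cycle in isolation, with an elementary divisibility count over the primes $n_1<\dots<n_c$. Since the $c$ cycles are node-disjoint, BP on their union acts on each $C_{2n_i}$ exactly as it would on $C_{2n_i}$ alone, so it suffices to reason cycle by cycle; I will show that for small $t$ at most $m<\tfrac{c}{2}$ of the cycles can have yielded a perfect matching, namely those $C_{2n_i}$ with $n_i\mid t$.

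Fix $t\le\min\{\tfrac{\wmax}{8c\varepsilon},\ \lfloor(\tfrac{n}{2c})^{c/2}\rfloor\}$ and write $t=k_in_i+\ell_i$ with $0\le\ell_i\le n_i-1$. Assume $\ell_i\ge 1$. By the Nibbling Lemma there is a node $v$ of $C_{2n_i}$ whose computation tree $\T{v}{t}$ is $k_i$ copies of $C_{2n_i}$ together with a tail of length $2\ell_i$ containing the heavy edge, and $W_{\mathrm{sub}}(\T{v}{t})-W_{\mathrm{opt}}(\T{v}{t})=-k_i\varepsilon+\Delta_{\ell_i}$ with $\Delta_{\ell_i}>\tfrac{\wmax}{4(n_i-1)}$. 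From $n_i>\tfrac{n}{2c}$ we get $k_i\le t/n_i<\tfrac{2ct}{n}\le\tfrac{\wmax}{4n\varepsilon}$, and from $n_i-1<\tfrac{n}{c}$ we get $\Delta_{\ell_i}>\tfrac{c\wmax}{4n}\ge\tfrac{\wmax}{4n}>k_i\varepsilon$, so $-k_i\varepsilon+\Delta_{\ell_i}>0$; since the two T-matchings of the path $\T{v}{t}$ are the monochromatic ones, the suboptimal T-matching wins and $v$ believes its suboptimal edge, ruling out $\Eopt$ as the BP matching on $C_{2n_i}$. Placing the root so that the heavy edge lies in a complete copy rather than in the tail — possible because a tail of fewer than $2n_i$ cycle-edges cannot meet every suboptimal edge of a copy — yields a node $v'$ whose tail carries $\ell_i$ optimal edges and $\ell_i$ non-heavy suboptimal edges; as each non-heavy suboptimal edge is strictly lighter than each optimal edge and every complete copy also favours the optimal edges, the optimal T-matching wins in $\T{v'}{t}$ and $v'$ believes its optimal edge, ruling out $\Esub$. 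As $\Eopt$ and $\Esub$ are the only perfect matchings of $C_{2n_i}$, the partial BP matching on $C_{2n_i}$ is not perfect whenever $\ell_i\ge1$.

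It remains to bound $m=\#\{i:n_i\mid t\}$; if $m=0$ there is nothing more to do, so assume $m\ge 1$. The product of the corresponding $m$ distinct primes, each exceeding $\tfrac{n}{2c}$, divides $t$, hence $t>(\tfrac{n}{2c})^{m}$. If $m\ge\tfrac{c}{2}$ this forces $t>(\tfrac{n}{2c})^{c/2}\ge\lfloor(\tfrac{n}{2c})^{c/2}\rfloor\ge t$ (using $\tfrac{n}{2c}>1$ for large $n$), a contradiction, so $m<\tfrac{c}{2}$. Consequently $\ell_i\ge1$ for more than $c-\tfrac{c}{2}=\tfrac{c}{2}$ of the cycles, and for each of these BP has not produced a perfect matching within $t$ iterations; this is the claim.

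The crux — and the only point where the exact parameters matter — is that $k_i\varepsilon<\Delta_{\ell_i}$ must hold simultaneously for every $i$; this succeeds precisely because the primes lie in the narrow window $(\tfrac{n}{2c},\tfrac{n}{c})$, so $n_i$ and $n_i-1$ are uniformly $\Theta(n/c)$ and the per-copy erosion $\varepsilon$ stays below the tail advantage $\Delta_{\ell_i}=\Omega(\wmax/n_i)$ across all cycles at once, with $\tfrac{\wmax}{8c\varepsilon}$ being exactly the threshold this balance tolerates. The one genuinely non-trivial auxiliary step is the existence of the node $v'$ with a heavy-free tail, which I would extract from the same root-placement bookkeeping that proves the Nibbling Lemma; if one is content to show merely that BP has not converged to the MWM $\Eopt$ on those cycles, $v'$ is unnecessary and the argument shortens.
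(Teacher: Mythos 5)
Your proof is correct and follows essentially the same route as the paper's: apply the Nibbling Lemma cycle by cycle whenever $t \not\equiv 0 \bmod n_i$, and bound the number of cycles with $n_i \mid t$ by observing that a product of more than $\tfrac{c}{2}$ primes from $(\tfrac{n}{2c},\tfrac{n}{c})$ would exceed $\lfloor(\tfrac{n}{2c})^{\nicefrac{c}{2}}\rfloor \geq t$. You are in fact a bit more thorough than the paper, which only invokes the looser condition $t \le \tfrac{n_i\wmax}{2\varepsilon}$ and does not explicitly exclude the suboptimal perfect matching $\Esub$; your verification that $k_i\varepsilon < \Delta_{\ell_i}$ under $t \le \tfrac{\wmax}{8c\varepsilon}$ and the second root placement $v'$ are refinements of, not departures from, the same argument.
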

	\begin{proof}[Proof of \Cref{lemma:technical-union-cycles}]
		Let $n_i \in P_{n,c}$. If $t \not\equiv 0 \mod n_i$ and $t \leq \tfrac{n_i {\cdot}\wmax}{2\varepsilon}$, then \Cref{lemma:nibbling} implies that BP does not find a perfect matching for $C_{2n_i}$ at iteration $t$. 
		Now note that for every $t \leq \min \{ \tfrac{\wmax}{8c\varepsilon}, \lfloor (\tfrac{n}{2c})^{\nicefrac{c}{2}} \rfloor \}$, the prime factorization of $t$ contains at most~$\tfrac{c}{2}$ distinct prime numbers from $P_{n,c}$. Hence, there is a set $Q \subseteq P_{n,c}$ of at least~$\tfrac{c}{2}$ prime numbers such that $t \not\equiv 0 \bmod n_j$ for all $n_j \in Q$. Now the claim follows. 
		~\hfill $\qedhere_{\text{\Cref{lemma:technical-union-cycles}}}$
	\end{proof}

	By embedding $\bigcup_{i=1}^{c} C_{2n_i}$ into the $K_{n,n}$ such that the arguments from \Cref{lemma:technical-union-cycles} still hold, we reach another important milestone in our reasoning. 
	W.l.o.g. we assume $n = \sum_{ i=1}^c n_i$; otherwise extend $K_{n',n'}$, where $n'=\sum_{ i=1}^c n_i$, with a matching on $2(n–n')$ new nodes and let each new matching edge $e$ receive the weight $w_e = \tfrac{\wmax}{2}$. Finally, weight every other edge~$e'$ in $K_{n,n}$ with $w_{e'} = –2\wmax$.
	
	\begin{proposition}\label{prop:embed-KNN}
		In every iteration of BP, every node $v$ in $\bigcup_{i=1}^c C_{2n_i}$ has exactly the same belief as $v$ in~$K_{n,n}$.
	\end{proposition}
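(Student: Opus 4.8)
The plan is to reproduce the proof of \Cref{prop:embed-Kpp} almost verbatim, now for several node-disjoint cycles together with the auxiliary matching on the $2(n-n')$ added nodes. Because $C_{2n_1},\dots,C_{2n_c}$ are node-disjoint, for a cycle node $v\in C_{2n_i}$ the computation tree $\T{v}{t}$ taken inside $\bigcup_{j}C_{2n_j}$ is literally the computation tree of $v$ taken inside $C_{2n_i}$ alone; this is a path on $2t+1$ vertices with $v$ at its midpoint, which we denote $\mathcal P$. As recalled before \Cref{lemma:nibbling}, the belief of a node at iteration $t$ is the edge incident with the root in a maximum weight T-matching of its computation tree, so it suffices to show that for every cycle node $v$ the maximum weight T-matchings of $\T{v}{t}$ built inside $K_{n,n}$ select the same edge at $v$ as those built inside $C_{2n_i}$. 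Inside $K_{n,n}$ every edge is a cycle edge of some $C_{2n_j}$ (of weight $\tfrac{\wmax}{2}$, $w_{\text{sub}}\ge 0$, or $\wmax$, as in \eqref{eq:weights}), one of the $n-n'$ added matching edges (of weight $\tfrac{\wmax}{2}\ge 0$), or a \emph{light edge} of weight $-2\wmax$; the first two kinds have nonnegative weight and every light edge is strictly negative.

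The core step is the claim that no maximum weight T-matching of $\T{v}{t}$ built inside $K_{n,n}$ uses a light edge. Granting this, the proposition follows as in \Cref{prop:embed-Kpp}: if $M$ is such a light-edge-free maximum weight T-matching, then every inner vertex is covered by a cycle edge of its own cycle (or by an added matching edge), and in particular every inner vertex of $\mathcal P$ is covered by an edge of $\mathcal P$, because a cycle node has only two cycle edges in the tree, one of which is its parent edge when the node lies on $\mathcal P$. Since the leaves of $\mathcal P$ have degree one in $\T{v}{t}$, the edges of $M$ split into those lying on $\mathcal P$ and those supported on vertices outside $\mathcal P$, and the two parts are optimized independently; hence $M\cap E(\mathcal P)$ is a maximum weight T-matching of $\mathcal P$, which is exactly the computation tree of $v$ inside $C_{2n_i}$, so its edge at $v$ is, by the same correspondence, the belief of $v$ inside $C_{2n_i}$. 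Finally the added matched nodes, attached to the rest of $K_{n,n}$ only by light edges, cannot affect any cycle node's belief; this proves the proposition.

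It remains to establish the claim, for which I would use the augmenting-path exchange of \Cref{prop:embed-Kpp}, illustrated in \Cref{fig:illustration-of-proposition}. Among all maximum weight T-matchings choose $T'$ minimizing the number of light edges and suppose it still contains one, $\{x,y\}$; delete it and re-cover the two exposed endpoints along alternating paths built from the cycle edges of a single cycle (running down that cycle's skeleton) --- every cycle node has a cycle edge to one of its tree-children, since at most one of its two cycle edges is its parent edge, and the root, if exposed, is re-covered through one of its two cycle-children --- or, when an exposed endpoint is an added node, through its matching edge. The result is again a T-matching with strictly fewer light edges, and, once the direction of the flip is chosen appropriately, of weight at least $W(T')$, hence again of maximum weight, which contradicts the minimality of the light-edge count. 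The main obstacle is precisely this last piece of weight bookkeeping: one must show that re-covering along a possibly long alternating path down a cycle skeleton never loses weight. As in \Cref{prop:embed-Kpp} this relies on the specific weights \eqref{eq:weights} --- the rerouting merely flips a path segment between its two near-perfect matchings, one is free to flip in the direction that does not decrease weight by choosing which incident cycle edge re-covers the exposed vertex, any further light edge met along the path is deleted and only improves the balance, and the $-2\wmax$ released at the start absorbs the remaining discrepancy. Relative to the single-cycle case the only genuinely new points to check are that a rerouting started at a light edge never has to leave the cycle (or matching edge) it entered, and that an exposed added node is always re-covered through its unique non-light edge.
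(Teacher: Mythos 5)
Your proposal takes essentially the same route as the paper, which proves \Cref{prop:embed-KNN} in a single line by declaring it analogous to \Cref{prop:embed-Kpp}, whose own proof is exactly the sketched claim you make central: maximum weight T-matchings in the computation trees of $K_{n,n}$ never use a light edge, shown by the augmenting-path exchange of \Cref{fig:illustration-of-proposition}. Your write-up is in fact more explicit than the paper's (treating the node-disjointness, the added matching on the $2(n-n')$ new nodes, and the restriction of a light-edge-free T-matching to the cycle path), and the weight bookkeeping you honestly flag as the remaining obstacle --- including the wrinkle that for a light edge deep in the tree the upper endpoint may have one of its two cycle edges as its tree-parent, so the ``choose the non-decreasing direction'' freedom is not always available and one must reroute through the parent side instead --- is glossed over in the paper's sketch as well.
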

	\begin{proof}[Proof of \Cref{prop:embed-KNN}]
		The proof is analogous to the proof of \Cref{prop:embed-Kpp}.
		~\hfill $\qedhere_{\text{\Cref{prop:embed-Kpp}}}$
	\end{proof}
	
	Hence, \Cref{lemma:technical-union-cycles} implies that BP fails to find perfect matchings for at least $\tfrac{c}{2}$ node-disjoint cycles in $K_{n,n}$. In order to gain a better understanding of completing partial BP matchings, the next example illustrates the exact behavior of BP for our constructed weights.

	\begin{figure}[t]
		\centering 	
		\begin{subfigure}{0.3\linewidth}
			\centering 
			\begin{tikzpicture}[xscale=1, yscale=0.5]
			\tikzset{
				bluematch/.style   = {double, double distance=1.5pt, thick, >=stealth },				
				redheavymatch/.style   = {line width=2.5pt },
				redmatch/.style   = {line width=1pt, >=stealth },
				tinycirc/.style = {draw, circle, inner sep=0pt, minimum width=6pt}					
			}

			\draw node[tinycirc, label=left:\footnotesize $\alpha_1$] (a1) at (0,-3) {};
			\draw node[tinycirc, label=right:\footnotesize $\beta_1$, fill=black] (b1) at (1,-3) {};	
			\draw node[tinycirc, label=left:\footnotesize $\alpha_2$] (a2) at (0,-4) {};
			\draw node[tinycirc, label=right:\footnotesize$\beta_2$] (b2) at (1,-4) {};	
			\draw node[tinycirc, label=left:\footnotesize $\alpha_3$] (a3) at (0,-5) {};
			\draw node[tinycirc, label=right:\footnotesize$\beta_3$] (b3) at (1,-5) {};
			\draw node[tinycirc, label=left:\footnotesize $\alpha_4$] (a4) at (0,-6) {};
			\draw node[tinycirc, label=right:\footnotesize$\beta_4$] (b4) at (1,-6) {};
			\draw node[tinycirc, label=left:\footnotesize $\alpha_5$, fill=black] (a5) at (0,-7) {};
			\draw node[tinycirc, label=right:\footnotesize$\beta_5$] (b5) at (1,-7) {};

			\draw[bluematch] (a2) -- (b2);
			\draw[bluematch] (a3) -- (b3);	
			\draw[bluematch] (a4) -- (b4);	
			\draw[bluematch,->] (b1) -- (a1);
			\draw[bluematch,->] (a5) -- (b5);

			\draw[redheavymatch] (a1) -- (b5);

			\end{tikzpicture}
			\caption{$t=1$}
		\end{subfigure}
		\begin{subfigure}{0.3\linewidth}
			\centering 
			\begin{tikzpicture}[xscale=1, yscale=0.5]
			\tikzset{
				bluematch/.style   = {double, double distance=1.5pt, thick, >=stealth },				
				redheavymatch/.style   = {line width=2.5pt },
				redmatch/.style   = {line width=1pt, >=stealth },
				tinycirc/.style = {draw, circle, inner sep=0pt, minimum width=6pt}					
			}

			\draw node[tinycirc, label=left:\footnotesize $\alpha_1$] (a1) at (0,-3) {};
			\draw node[tinycirc, label=right:\footnotesize $\beta_1$, fill=black] (b1) at (1,-3) {};	
			\draw node[tinycirc, label=left:\footnotesize $\alpha_2$] (a2) at (0,-4) {};
			\draw node[tinycirc, label=right:\footnotesize $\beta_2$] (b2) at (1,-4) {};	
			\draw node[tinycirc, label=left:\footnotesize $\alpha_3$] (a3) at (0,-5) {};
			\draw node[tinycirc, label=right:\footnotesize $\beta_3$] (b3) at (1,-5) {};
			\draw node[tinycirc, label=left:\footnotesize $\alpha_4$] (a4) at (0,-6) {};
			\draw node[tinycirc, label=right:\footnotesize $\beta_4$] (b4) at (1,-6) {};
			\draw node[tinycirc, label=left:\footnotesize $\alpha_5$, fill=black] (a5) at (0,-7) {};
			\draw node[tinycirc, label=right:\footnotesize $\beta_5$] (b5) at (1,-7) {};

			\draw[bluematch] (a2) -- (b2);
			\draw[bluematch] (a3) -- (b3);	
			\draw[bluematch] (a4) -- (b4);	

			\draw[redheavymatch] (a1) -- (b5);
			\draw[redmatch,->] (b1) -- (a2);
			\draw[redmatch,->] (a5) -- (b4);

			\end{tikzpicture}
			\caption{$t=2$}
		\end{subfigure}
		\begin{subfigure}{0.3\linewidth}
			\centering 
			\begin{tikzpicture}[xscale=1, yscale=0.5]
			\tikzset{
				bluematch/.style   = {double, double distance=1.5pt, thick, >=stealth },				
				redheavymatch/.style   = {line width=2.5pt },
				redmatch/.style   = {line width=1pt, >=stealth },
				tinycirc/.style = {draw, circle, inner sep=0pt, minimum width=6pt}					
			}

			\draw node[tinycirc, label=left:\footnotesize $\alpha_1$] (a1) at (0,-3) {};
			\draw node[tinycirc, label=right:\footnotesize $\beta_1$] (b1) at (1,-3) {};	
			\draw node[tinycirc, label=left:\footnotesize $\alpha_2$] (a2) at (0,-4) {};
			\draw node[tinycirc, label=right:\footnotesize $\beta_2$, fill=black] (b2) at (1,-4) {};	
			\draw node[tinycirc, label=left:\footnotesize $\alpha_3$] (a3) at (0,-5) {};
			\draw node[tinycirc, label=right:\footnotesize $\beta_3$] (b3) at (1,-5) {};
			\draw node[tinycirc, label=left:\footnotesize $\alpha_4$, fill=black] (a4) at (0,-6) {};
			\draw node[tinycirc, label=right:\footnotesize $\beta_4$] (b4) at (1,-6) {};
			\draw node[tinycirc, label=left:\footnotesize $\alpha_5$] (a5) at (0,-7) {};
			\draw node[tinycirc, label=right:\footnotesize $\beta_5$] (b5) at (1,-7) {};

			\draw[bluematch, ->] (b2) -- (a2);
			\draw[bluematch] (a3) -- (b3);	
			\draw[bluematch,->] (a4) -- (b4);	

			\draw[redheavymatch] (a1) -- (b5);
			\draw[redmatch] (b1) -- (a2);
			\draw[redmatch] (a5) -- (b4);

			\end{tikzpicture}
			\caption{$t=3$}
		\end{subfigure}
		\vspace{0.5cm}
		
		\begin{subfigure}{0.3\linewidth}
			\centering 
			\begin{tikzpicture}[xscale=1, yscale=0.5]
			\tikzset{
				bluematch/.style   = {double, double distance=1.5pt, thick, >=stealth },				
				redheavymatch/.style   = {line width=2.5pt },
				redmatch/.style   = {line width=1pt, >=stealth },
				tinycirc/.style = {draw, circle, inner sep=0pt, minimum width=6pt}					
			}

			\draw node[tinycirc, label=left:\footnotesize $\alpha_1$] (a1) at (0,-3) {};
			\draw node[tinycirc, label=right:\footnotesize $\beta_1$] (b1) at (1,-3) {};	
			\draw node[tinycirc, label=left:\footnotesize $\alpha_2$] (a2) at (0,-4) {};
			\draw node[tinycirc, label=right:\footnotesize $\beta_2$, fill=black] (b2) at (1,-4) {};	
			\draw node[tinycirc, label=left:\footnotesize $\alpha_3$] (a3) at (0,-5) {};
			\draw node[tinycirc, label=right:\footnotesize $\beta_3$] (b3) at (1,-5) {};
			\draw node[tinycirc, label=left:\footnotesize $\alpha_4$, fill=black] (a4) at (0,-6) {};
			\draw node[tinycirc, label=right:\footnotesize $\beta_4$] (b4) at (1,-6) {};
			\draw node[tinycirc, label=left:\footnotesize $\alpha_5$] (a5) at (0,-7) {};
			\draw node[tinycirc, label=right:\footnotesize $\beta_5$] (b5) at (1,-7) {};

			\draw[bluematch] (a3) -- (b3);

			\draw[redheavymatch] (a1) -- (b5);
			\draw[redmatch] (b1) -- (a2);
			\draw[redmatch, ->] (b2) -- (a3);
			\draw[redmatch] (a5) -- (b4);
			\draw[redmatch, ->] (a4) -- (b3);

			\end{tikzpicture}
			\caption{$t=4$}
		\end{subfigure}	
		\begin{subfigure}{0.3\linewidth}
			\centering 
			\begin{tikzpicture}[xscale=1, yscale=0.5]
			\tikzset{
				bluematch/.style   = {double, double distance=1.5pt, thick, >=stealth },				
				redheavymatch/.style   = {line width=2.5pt },
				redmatch/.style   = {line width=1pt, >=stealth },
				tinycirc/.style = {draw, circle, inner sep=0pt, minimum width=6pt}					
			}

			\draw node[tinycirc, label=left:\footnotesize $\alpha_1$] (a1) at (0,-3) {};
			\draw node[tinycirc, label=right:\footnotesize $\beta_1$] (b1) at (1,-3) {};	
			\draw node[tinycirc, label=left:\footnotesize $\alpha_2$] (a2) at (0,-4) {};
			\draw node[tinycirc, label=right:\footnotesize $\beta_2$] (b2) at (1,-4) {};	
			\draw node[tinycirc, label=left:\footnotesize $\alpha_3$] (a3) at (0,-5) {};
			\draw node[tinycirc, label=right:\footnotesize $\beta_3$] (b3) at (1,-5) {};
			\draw node[tinycirc, label=left:\footnotesize $\alpha_4$] (a4) at (0,-6) {};
			\draw node[tinycirc, label=right:\footnotesize $\beta_4$] (b4) at (1,-6) {};
			\draw node[tinycirc, label=left:\footnotesize $\alpha_5$] (a5) at (0,-7) {};
			\draw node[tinycirc, label=right:\footnotesize $\beta_5$] (b5) at (1,-7) {};

			\draw[bluematch] (a1) -- (b1);	
			\draw[bluematch] (a2) -- (b2);			
			\draw[bluematch] (a3) -- (b3);	
			\draw[bluematch] (a4) -- (b4);	
			\draw[bluematch] (a5) -- (b5);	
			
			\end{tikzpicture}
			\caption{$t=5$}
		\end{subfigure}	
		\begin{subfigure}{0.3\linewidth}
			\centering 
			\begin{tikzpicture}[xscale=1, yscale=0.5]
			\tikzset{
				bluematch/.style   = {double, double distance=1.5pt, thick, >=stealth },				
				redheavymatch/.style   = {line width=2.5pt },
				redmatch/.style   = {line width=1pt, >=stealth },
				tinycirc/.style = {draw, circle, inner sep=0pt, minimum width=6pt}					
			}

			\draw node[tinycirc, label=left:\footnotesize $\alpha_1$] (a1) at (0,-3) {};
			\draw node[tinycirc, label=right:\footnotesize $\beta_1$, fill=black] (b1) at (1,-3) {};	
			\draw node[tinycirc, label=left:\footnotesize $\alpha_2$] (a2) at (0,-4) {};
			\draw node[tinycirc, label=right:\footnotesize$\beta_2$] (b2) at (1,-4) {};	
			\draw node[tinycirc, label=left:\footnotesize $\alpha_3$] (a3) at (0,-5) {};
			\draw node[tinycirc, label=right:\footnotesize$\beta_3$] (b3) at (1,-5) {};
			\draw node[tinycirc, label=left:\footnotesize $\alpha_4$] (a4) at (0,-6) {};
			\draw node[tinycirc, label=right:\footnotesize$\beta_4$] (b4) at (1,-6) {};
			\draw node[tinycirc, label=left:\footnotesize $\alpha_5$, fill=black] (a5) at (0,-7) {};
			\draw node[tinycirc, label=right:\footnotesize$\beta_5$] (b5) at (1,-7) {};

			\draw[bluematch] (a2) -- (b2);
			\draw[bluematch] (a3) -- (b3);	
			\draw[bluematch] (a4) -- (b4);	
			\draw[bluematch, ->] (b1) -- (a1);
			\draw[bluematch,->] (a5) -- (b5);

			\draw[redheavymatch] (a1) -- (b5);
			
			\end{tikzpicture}
			\caption{$t=6$}
		\end{subfigure}	
		
		\caption{beliefs in $C_{10}$ for iteration $t=1,\dots, 6$; see \Cref{ex:pos-harm-c10}.}
		\label{fig:four-inconsistency}
	\end{figure}
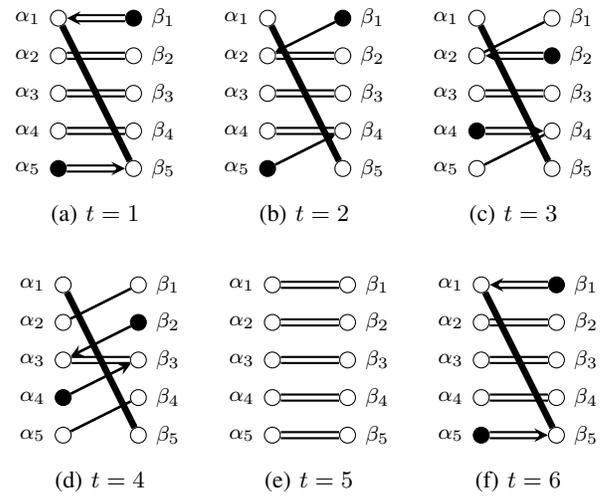

	\begin{example}\label{ex:pos-harm-c10}
		Consider the cycle $C_{10}$. The beliefs at iteration $t=1,\dots,6$ are shown in \Cref{fig:four-inconsistency} where each undirected edge $\{u,v\}$ indicates that both endpoints believe in $\{u,v\}$ belonging to the MWM; and where each directed edge $(u,v)$ indicates that $u$ believes in $\{u,v\}$ belonging to the MWM, but $v$ does not. With increasing $t$, the number of optimal edges in the partial BP matching decreases and the number of suboptimal edges increases. However, in each iteration $t=1,\dots, 4$, there are only two nodes that are not endpoints of a partial matching. Finally, BP finds the optimal matching at iteration $t=5$. For larger $t$, the beliefs repeat periodically until the process converges. 
	\end{example}
	
	\begin{lemma}\label{lemma:union-cycles}
		For every iteration $t \leq \min \{ \tfrac{\wmax}{8c\varepsilon},  \lfloor (\tfrac{n}{2c})^{\nicefrac{c}{2}} \rfloor \}$ and every completion of a partial BP matching, its weight is at most $\big(1 – \Theta(\frac{c}{n})\big)\cdot W_{\text{opt}}$, where $W_{\text{opt}} = n {\cdot} \tfrac{\wmax}{2}$ is the weight of the MWM  for $K_{n,n}$.
	\end{lemma}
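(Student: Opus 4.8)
The plan is to show that any completion $M$ of a partial BP matching at an iteration $t$ in the stated range must contain $\Omega(c)$ \emph{light} edges — the edges of weight $-2\wmax$ — and then to convert this forced deficit into the claimed multiplicative loss, using that the only edges heavier than the baseline $\tfrac{\wmax}{2}$ are the $c$ heavy edges.

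First I would invoke \Cref{lemma:technical-union-cycles} and \Cref{prop:embed-KNN}: for $t\le\min\{\tfrac{\wmax}{8c\varepsilon},\lfloor(\tfrac{n}{2c})^{\nicefrac c2}\rfloor\}$ there is a set $S$ of at least $\tfrac c2$ indices such that, for $i\in S$, BP does not find a perfect matching on $C_{2n_i}$; writing $P$ for the partial BP matching and $P_i\defeq P\cap E(C_{2n_i})$, this says $P_i$ is not a perfect matching of $C_{2n_i}$. The technical heart is a sharp description of $P_i$ for $i\in S$, generalizing \Cref{ex:pos-harm-c10}: $P_i$ always retains the heavy edge $\{\alpha_1,\beta_{n_i}\}$ and at least one optimal edge, and it leaves exactly two uncovered vertices, one per layer, which are \emph{not} adjacent in $C_{2n_i}$ (for instance, at the iterations where $\ell=t\bmod n_i=1$ these two vertices are $\alpha_{n_i}$ and $\beta_1$, which are joined only by a chord). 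Meanwhile every other cycle and the $2(n-n')$ extra matched nodes are already fully covered by $P$. In particular $P_i\not\subseteq\Eopt^{(i)}$ and $P_i\not\subseteq\Esub^{(i)}$.

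Next fix an arbitrary completion $M\supseteq P$. For $i\in S$ let $a_i$ be the $P$-uncovered $\alpha$-vertex of $C_{2n_i}$. Its partner $M(a_i)$ is again $P$-uncovered (since $P\subseteq M$), hence lies in a bad cycle; if it lies in cycle $i$ it is that cycle's unique uncovered $\beta$-vertex, which by the description above is non-adjacent to $a_i$, so $\{a_i,M(a_i)\}$ is not a cycle edge; if it lies in another cycle it is a cross edge. Either way $\{a_i,M(a_i)\}$ is light, and these edges are distinct over $i\in S$, so $M$ contains at least $|S|\ge\tfrac c2$ light edges. (Even without the non-adjacency claim, the weaker fact $P_i\not\subseteq\Eopt^{(i)},\Esub^{(i)}$ already forces a light edge incident to each $V(C_{2n_i})$, and since a light edge touches at most two cycles this still gives $\tfrac c4$ light edges — enough for the conclusion.) Finally, measuring each edge against $\tfrac{\wmax}{2}$ — optimal, suboptimal-non-heavy and extra edges have weight at most $\tfrac{\wmax}{2}$; each of the at most $c$ heavy edges is $\tfrac{\wmax}{2}$ above the baseline; each light edge is $\tfrac{5\wmax}{2}$ below it — yields
\[
W(M)=n\tfrac{\wmax}{2}+\sum_{e\in M}\Bigl(w_e-\tfrac{\wmax}{2}\Bigr)\le n\tfrac{\wmax}{2}+c\tfrac{\wmax}{2}-\tfrac c2\cdot\tfrac{5\wmax}{2}=\Bigl(1-\tfrac{3c}{2n}\Bigr)W_{\text{opt}},
\]
which is $\bigl(1-\Theta(\tfrac cn)\bigr)W_{\text{opt}}$ since $W_{\text{opt}}=n\tfrac{\wmax}{2}$.

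The light-edge count and the arithmetic are routine; the obstacle is the structural description of $P_i$, and especially the non-adjacency of its two uncovered vertices. Establishing it means tracking the beliefs of \emph{all} nodes of $C_{2n_i}$ at iteration $t$, not just the single well-chosen node used for \Cref{thm:one-cycle}: one needs a uniform version of the Nibbling Lemma showing that, because $t\not\equiv 0\bmod n_i$ and $k$ is small, the heavy edge's $\Delta_\ell$-advantage at the root of its endpoints' computation trees survives the $-k\varepsilon$ correction, so the heavy edge stays in $P_i$, while nodes whose computation-tree tails avoid the heavy edge still believe their optimal edges.
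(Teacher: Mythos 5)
Your proposal follows essentially the same route as the paper's proof: \Cref{lemma:technical-union-cycles} (via the embedding of \Cref{prop:embed-KNN}) gives at least $\tfrac{c}{2}$ cycles whose partial BP matching has $n_i-1$ edges and leaves two non-adjacent vertices uncovered, any completion is therefore forced to use $\Omega(c)$ light edges, and the baseline accounting yields the $\big(1-\Theta(\tfrac{c}{n})\big)$ factor, with your treatment of cross-cycle completions and the global $\tfrac{\wmax}{2}$-baseline sum being a slightly more careful version of the paper's per-cycle estimate. The structural description of $P_i$ that you flag as the remaining obstacle is exactly the point the paper itself supports only by ``the observation we made in \Cref{ex:pos-harm-c10}'' (the periodic belief pattern) rather than by a uniform Nibbling-Lemma analysis, so you have not missed an ingredient of the paper's argument; indeed your fallback remark --- that having the heavy edge together with one optimal edge in $P_i$ already forces a light edge, because $C_{2n_i}$ has only the two perfect matchings $\Eopt$ and $\Esub$ --- needs strictly less than the full non-adjacency claim and would mildly simplify the paper's reasoning.
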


	\begin{proof}[Proof of \Cref{lemma:union-cycles}]
		As a consequence of \Cref{lemma:technical-union-cycles} and  the observation we made in \Cref{ex:pos-harm-c10}, there is a set $Q \subseteq P_{n,c}$ of at least $\tfrac{c}{2}$ prime numbers such that for each $n_i \in Q$, the partial BP matching for $C_{2n_i}$ consists of $n_i – 1$ edges. 
		In order to complete the partial BP matching for one of those cycles, we are forced to add a light edge, \ie an edge~$e$ (between two black nodes in \Cref{fig:four-inconsistency}) with weight $w_e = –2\wmax$. In iteration $t \equiv 1 \bmod n_i$, the completion has the highest weight, namely $– 2\wmax + (n_i–2) \cdot \tfrac{\wmax}{2} + \wmax = – 2\wmax + n_i \cdot \tfrac{\wmax}{2}$. Thus the completion for $C_{2n_i}$ has a weight of at most $\big(1 – \Theta(\tfrac{1}{n_i})\big) \cdot W^{(i)}_\text{opt}$, where $W^{(i)}_\text{opt} = n_i \cdot \tfrac{\wmax}{2}$ is the weight of the MWM restricted to $C_{2n_i}$. 
		In total, completing partial BP matchings for $K_{n,n}$ is at most~$1 – \Theta(\tfrac{c}{n})$-approximative. 
			~\hfill $\qedhere_{\text{\Cref{lemma:union-cycles}}}$
	\end{proof}
	
	A worst-case analysis of \Cref{lemma:union-cycles} concludes the proof of \Cref{thm:union-cycles}. 
	In order to push $\lfloor (\tfrac{n}{2c})^c \rfloor$ as close as possible to the convergence bound, we are interested in the largest $c$ such that $\lfloor (\tfrac{n}{2c})^c \rfloor \leq \tfrac{\wmax}{8c\varepsilon} \leq \lfloor (\tfrac{n}{2(c+1)})^{c+1} \rfloor$ holds. Observe that the left-hand and right-hand side of this inequation differ at most by the factor~$n$. Hence, $\lfloor (\tfrac{n}{2c})^c \rfloor \geq \tfrac{\wmax}{8nc\varepsilon}$, \ie we lose the factor $\tfrac{1}{16n^2c}$ of the Bayati-Shah-Sharma convergence bound. 
	Now \Cref{thm:union-cycles} follows by plugging in $c = \tfrac{1}{2} \sqrt{n/\log(n)}$ into \Cref{lemma:union-cycles}.
	~\hfill $\qedhere_{\text{\Cref{thm:union-cycles}}}$
\end{proof}

\section{Approximate Belief Propagation}\label{sec:approximate-bp}
	Finally, we present a linear-time algorithm for the completion of partial matchings which ``respects'' the beliefs of the nodes and only adds edges. However, since the analysis in the proof of \Cref{thm:union-cycles} is not restricted to any specific algorithm, the algorithm described here cannot improve its approximation factor.
	
We call a pair $(\alpha,\beta)$ a \emph{conflict} if exactly one of the two nodes believes that $\{\alpha, \beta\}$ belongs to the MWM. 
For each BP-iteration $t$ for the $K_{n,n} = (A_n, B_n, E_n)$, let $C^{\text{BP}(t)} \defeq (A, B, E_t)$ be the bipartite \emph{conflict graph} with 
\begin{align*}
	A &\defeq \{\alpha \mid \alpha \text{ is not covered by the partial BP matching}\}, \\
	B &\defeq \{\beta \mid \beta \text{ is not covered by the partial BP matching}\}, \\
	E_t &\defeq \big\{  \{ \alpha, \beta \} \mid (\alpha,\beta) \text{ is a conflict}  \big\}.
\end{align*}

\Cref{fig:belief-graph} shows the transformation of beliefs into a conflict graph. 
Note that every connected component of a conflict graph for the assignment problem has at most one cycle. 

\begin{figure}[t]
	\begin{subfigure}{0.48\linewidth}
		\centering 
		\begin{tikzpicture}[xscale=1, yscale=0.5]		
		\draw node[tinycirc, label=left:$\alpha_1$, fill=black] (a1) at (0,  0) {};
		\draw node[tinycirc, label=left:$\alpha_2$, fill=black] (a2) at (0, -1) {};
		\draw node[tinycirc, label=left:$\alpha_3$, fill=black] (a3) at (0, -2) {};
		\draw node[tinycirc, label=left:$\alpha_4$, fill=black] (a4) at (0, -3) {};
		\draw node[tinycirc, label=left:$\alpha_5$] (a5) at (0, -4) {};
		\draw node[tinycirc, label=left:$\alpha_6$, fill=black] (a6) at (0, -5) {};
		
		\draw node[tinycirc, label=right:$\beta_1$, fill=black] (b1) at (1,  0) {};
		\draw node[tinycirc, label=right:$\beta_2$, fill=black] (b2) at (1, -1) {};
		\draw node[tinycirc, label=right:$\beta_3$, fill=black] (b3) at (1, -2) {};
		\draw node[tinycirc, label=right:$\beta_4$, fill=black] (b4) at (1, -3) {};
		\draw node[tinycirc, label=right:$\beta_5$] (b5) at (1, -4) {};
		\draw node[tinycirc, label=right:$\beta_6$, fill=black] (b6) at (1, -5) {};
		
		\draw[>=stealth, ->] (a1) -- (b2);
		\draw[>=stealth, ->] (a2) -- (b2);
		\draw[>=stealth, ->] (a3) -- (b3);
		\draw[>=stealth, ->] (a4) -- (b4);
		\draw[>=stealth] (a5) -- (b5);
		\draw[>=stealth, ->] (a6) -- (b4);
		
		\draw[>=stealth, ->] (b1) -- (a2);
		\draw[>=stealth, ->] (b2) -- (a3);
		\draw[>=stealth, ->] (b3) -- (a4);
		\draw[>=stealth, ->] (b4) -- (a3);
		\draw[>=stealth] (b5) -- (a5);
		\draw[>=stealth, ->] (b6) -- (a5);
		\end{tikzpicture}
		\label{fig:belief-graph-a}
	\end{subfigure}
	\begin{subfigure}{0.48\linewidth}
		\centering 
		\begin{tikzpicture}[xscale=1, yscale=0.5]
		
		\draw node[tinycirc, label=left:$\alpha_1$, fill=black] (a1) at (0,  0) {};
		\draw node[tinycirc, label=left:$\alpha_2$, fill=black] (a2) at (0, -1) {};
		\draw node[tinycirc, label=left:$\alpha_3$, fill=black] (a3) at (0, -2) {};
		\draw node[tinycirc, label=left:$\alpha_4$, fill=black] (a4) at (0, -3) {};
		\draw node[tinycirc, label=left:$\alpha_6$, fill=black] (a6) at (0, -5) {};
		
		\draw node[tinycirc, label=right:$\beta_1$, fill=black] (b1) at (1,  0) {};
		\draw node[tinycirc, label=right:$\beta_2$, fill=black] (b2) at (1, -1) {};
		\draw node[tinycirc, label=right:$\beta_3$, fill=black] (b3) at (1, -2) {};
		\draw node[tinycirc, label=right:$\beta_4$, fill=black] (b4) at (1, -3) {};
		\draw node[tinycirc, label=right:$\beta_6$, fill=black] (b6) at (1, -5) {};
		
		\draw[-] (a1) -- (b2);
		\draw[-] (a2) -- (b2);
		\draw[-] (a3) -- (b3);
		\draw[-] (a4) -- (b4);
		
		\draw[-] (a6) -- (b4);
		
		\draw[-] (b1) -- (a2);
		\draw[-] (b2) -- (a3);
		\draw[-] (b3) -- (a4);
		\draw[-] (b4) -- (a3);

		\end{tikzpicture}
		\label{fig:belief-graph-b}
	\end{subfigure}
	\caption{graphical representation of beliefs and their conflict graph; the edge $\{\alpha_5,\beta_5\}$ belongs the partial BP matching and hence, both endpoints do not occur in the conflict graph.}
	\label{fig:belief-graph}
\end{figure}
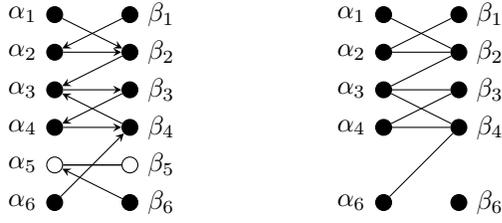

\goodbreak 
\enlargethispage{-0.6cm}

W.l.o.g. let the conflict graph be connected and have a cycle. For every iteration $t$, let $M_t$ be the approximate MWM in $K_{n,n}$ initialized with the partial BP matching. 
For an arbitrary cycle edge $e$ consider the following two a-posteriori cases: 
\begin{enumerate}[topsep=0pt, itemsep=0pt]
	\item[(a)] $e$ belongs to $M_t$; then remove $e$ and its incident edges from the conflict graph;
	\item[(b)] $e$ does not belong to $M_t$; then remove $e$ as well.
\end{enumerate}
In either case the resulting graph is a forest (or even a tree).  
We execute BP for both forests and obtain maximum weight T-matchings  $M_a$ and $M_b$ since BP is correct for trees (see  \cite[Theorem 14.1]{mezard2009information} for a detailed proof). 
If $W(M_a) > W(M_b)$, then set $M_t \defeq M_t \cup M_a \cup \{e\}$, otherwise set $M_t \defeq M_t \cup M_b$. Now remove the edges in $M_t$ and their endpoints from the conflict graph. 

We still have to worry about matching the remaining leafs, denoted by the subsets $A'$ and $B'$.
Observe that $|A'| = |B'|$ and that the set of edges between $A'$ and $B'$ in the conflict graph is empty. 
Compute an arbitrary matching $M'$ between $A'$ and $B'$ with edges from $K_{n,n}$, \eg by using a greedy algorithm, and set $M_t \defeq M_t \cup M'$. 
Finally, output the approximate MWM~$M_t$.

For the weights that we used in the proofs of \Cref{thm:one-cycle} and \Cref{thm:union-cycles}, this algorithm is trivial since the conflict graph consists of isolated nodes only. However, even though approximate BP cannot improve upon the $1–\Theta(\tfrac{c}{n})$ barrier from \Cref{lemma:union-cycles}, we suggest that similar algorithms should also be of interest for the application of BP to other combinatorial optimization problems. 
\goodbreak 

\section{Conclusions}
\label{sec:conclusions}
We established lower bounds on the running time of the BP algorithm for the assignment problem. With respect to convergence, \Cref{thm:one-cycle} states that the upper bound of $2n {\cdot} \wmax/\varepsilon$ on the number of iterations (see \Cref{thm:bayatishahsharma}) is tight up to a factor of four. 
\Cref{thm:union-cycles} considers the behavior of BP when convergence is not required. There are edge weights for complete bipartite graphs such that tight BP-based approximations consume a large portion of the convergence time. The exact number of iterations for a $1-\nicefrac{1}{\sqrt{n/\log(n)}}$-approximate solution belongs to the interval $\Theta(  \sqrt{\log(n)/n^3} {\cdot} \tfrac{\wmax}{\varepsilon}), \dots, 2n{\cdot}\tfrac{\wmax}{\varepsilon}$. We have to leave its exact value open. 
Possibly, a tight analysis of the approximation time requires a construction different from our cycle construction. 

We proposed an approximate BP algorithm which has the advantage of outputting a (suboptimal) solution in every iteration. 
Also, similar lower bounds for other applications of BP to combinatorial optimization problems remain an open research question. An upper bound for the convergence time for the MWM problem for non-bipartite graphs -- under certain restrictions -- is shown in~\cite{DBLP:journals/tit/AhnCGPS18}. Our methods can be utilized to provide a tight runtime analysis in this case too.
Finally we pose the question, under which circumstances can the proposed approximate BP algorithm be used as a tool when BP does not converge or when the underlying decision problem is computationally hard?

\bibliographystyle{IEEEtran}

\end{document}